\documentclass[sigconf,nonacm]{acmart}

\usepackage{balance}
\usepackage{times,amsmath,epsfig, array,tabularx}
\usepackage{mathrsfs}
\usepackage{subfigure}
\usepackage[linesnumbered,ruled, noend]{algorithm2e}
\usepackage{url}
\usepackage{float}
\usepackage{pifont}

\usepackage{tikz}
\usetikzlibrary{shadings}
\newcommand{\Mt}[2]{ \tikz[baseline=(X.base)] {\node[circle, draw=black, minimum height=6pt, inner sep=1pt, fill=#1](X){#2};}}

\graphicspath{{Figure/}}

\newtheorem{definition}{Definition}[section]
\newtheorem{lemma}{Lemma}
\newtheorem{theorem}{Theorem}[section]


\begin{document}

\title{XMiner: Efficient Directed Subgraph Matching with Pattern Reduction}

\author{
       Pingpeng Yuan, Yujiang Wang, Tianyu Ma, Siyuan He}
\affiliation{%
  \institution{School of Computer Science \& Technology,\\
  Huazhong University of Science and Technology}  
  \city{Wuhan}
  \country{China}
}
\author{Ling Liu}
\affiliation{%
  \institution{College of Computing,\\ Georgia Institute of Technology}  
  \city{Atlanta}
  \country{USA}
}

\begin{abstract}
Graph pattern matching, one of the fundamental graph mining problems,  aims to extract structural patterns of interest from an input graph. 
The state-of-the-art  graph matching algorithms and systems are mainly designed for undirected graphs. Directed graph matching is more complex than undirected graph matching because the edge direction must be taken into account before the exploration of each directed edge. 
Thus, the technologies (e.g. storage, exploiting symmetry for graph matching) for undirected graph matching may not be fully applicable to directed graphs. For example, the redundancy implied in directed graph pattern can not be detected using the symmetry breaking for undirected pattern graph. 
Here, we present XMiner for efficient directed graph pattern matching whose core idea is 'pattern reduction'. It first analyzes the relationship between constraints implied in a pattern digraph. Then it reduces the pattern graph into a simplified form by finding a minimum constraint cover. Finally, XMiner generates an execution plan and follows it to extract matchings of the pattern graph. So, XMiner works on simplified pattern graph and avoids much data access and redundant computation throughout the matching process. 
Our experimental results show that XMiner outperforms state-of the-art stand-alone graph matching systems,
and scales to complex graph pattern matching tasks on larger graphs.
\end{abstract}
\maketitle
\section{Introduction}
\label{sec:intro}
Now, data are huge and highly connected. Analysis of them to discover inherent association is becoming even more important. So, many graph processing systems or applications are developed for the tasks, which can be roughly classified into two kinds: iterative computation and graph analysis \cite{YuanICDCS2019}. 
Graph analytics involves graph mining, which includes subgraph matching/enumeration, subgraph finding, subgraph mining, and graph clustering \cite{G-Miner18}.  The graph matching tasks mainly involve exploring the substructures within the input graph and need to search huge space if data graph is large. Thus, graph matching problems are generally computationally intensive and difficult to process in an efficient way. Many graph matching approaches have been proposed \cite{SunSIGMOD20,Kim2021,GIProblem,GIPNAS2015}.  Since they are single-threaded programs, they can not process complex patterns or large data graphs in reasonable time.  
So, RStream \cite{RStream18}, Fractal \cite{Fractal},
G-Miner \cite{GMiner}, GraphPi \cite{GraphPi}, Pangolin \cite{Pangolin}, Peregrine \cite{Peregrine}, and AutoMine \cite{AutoMine} etc. provide a parallel or distributed framework to  efficiently support graph mining tasks. The graph processing systems exhaustively search a data graph for the target subgraphs (e.g., matchings of a specific pattern) in a step-by-step way. As these subgraphs are explored and are checked using filters, the un-matched subgraphs get pruned and the matchings are returned. In this way, the systems prune the search space. While such exploration frameworks are general enough to compute different graph mining use cases, 
state-of-the-art graph matching approaches face some issues as described next (For the reason of distinction, we call the objects of pattern graph and data graph as vertices and nodes, respectively, call the links of pattern graph and data graph as edges and arcs, respectively.):

\textbf{Few Symmetry}. Many real graphs are directed graphs. Directed graphs have more complex structural patterns and less symmetry than undirected graphs. So, Matching a direct subgraph pattern is more challenging. 
it is more unlikely for directed pattern graphs to have symmetry structures. For instance, two rectangles in Figure \ref{fig:rectangle} are asymmetry while their undirected versions are symmetry. So, some symmetry technologies developed in undirected graph matching approaches \cite{GraphPi,Peregrine} can not efficiently improve the performance of directed graph matching. 

\textbf{Redundant Data Access}. In graph matching problem, the domain of edges or vertices in the given pattern graph is the data graph. So, they may share many matchings when the data graph is big. Thus, there exist many redundant data access operations which are expensive if data graph is stored in slow storage. For example, in Figure \ref{fig:rectangle}, 
it is easy to know that the matchings of edge $(b,c)$ are the subset of the matchings of edge $(a,c)$. When we generate partial matches of the two edges from data graph (Fig. \ref{fig:datagraph}), one common way is to materialize two edges using in-arcs of matching of $c$ separately. So, the data matching the two edges will be accessed twice. Another way is to initialize $(b,c)$ and $(a,c)$ via one access of data. Although it avoids repeat access, some unmatchings of $(b,c)$ will be hold on memory until they are pruned.

\textbf{Redundant Exploration}. Matchings of an edge in a pattern graph may be subset of another edge's matchings. As the size of the pattern graph increases, it implies more such relationships between edges. Current graph matching approaches ignore such relationships and check matchings of the edges again during their explorations. So, 
matchings of an edge are not reused for the exploration of another edge although the two edges have such relationship which indicates some exploration steps are not necessary. For example, in Fig. \ref{fig:rectangle}, the matchings of edge $(f,c)$ is subset of results of $(b,c)$ while some matchings of $(a,c)$ are matchings of $(b,c)$. The initial matchings of ($a$, $c$), ($b$, $c$), and ($f$, $c$) are 13 in-arcs of $c$'s matchings (vertex 3, 6, 8 in Fig. \ref{fig:datagraph}). Then, each of 13 in-arcs will be checked during the explorations of those edges. However, some matchings for $(f,c)$, which are not consistent with $(b,c)$, are checked again. Hence, there exists many redundant explorations. The redundant explorations also incur large number of partial matches resident in memory for further operations. When the data graph is big, the amount of redundant explorations and memory consumption grow rapidly. 

To address these problems, it motivates us to develop a directed subgraph matching approach which reduces a pattern graph so as to avoid redundant data access and exploration by identifying relevant constraints and reusing the matchings of relevant constraints. An important concept of our approach is the constraint inclusion relationship in which the matchings of a constraint (e.g. edge) are included by the matchings of another constraint. So a pattern graph can be reduced by temporarily removing some constraints included by other constraints.  
We take the constraint inclusion relationship to highlight a crucial feature of graph pattern matching problems. Following the idea, we take a 'pattern-reduction' approach towards building XMiner for efficient digraph matching. XMiner first reduces $\mathbb{G}$ by analysis of the inclusion relationship between constraints and then explores the reduced pattern graph. For example, in Figure \ref{fig:rectangle}, edge ($b$, $c$) and ($f$, $c$) belong to the constraint inclusion set of ($a$, $c$) because the matchings of edge ($b$, $c$) and ($f$, $c$) are included in ($a$, $c$). So, ($b$, $c$) and ($f$, $c$) can be removed temporarily. Partial matchings to the reduced pattern graph can be extended to the original pattern graph by materializing the removed constraints using previous results at suitable time. For example, ($b$, $c$) can be materialized using the results of ($a$, $c$) after some embeddings are removed due to degree. Similarly, the results of ($b$, $c$) can be extended to ($f$, $c$). Thus, XMiner avoids redundant data access and exploration of unnecessary edges.
Concretely,

\textbf{Pattern Graph Reduction}. By analyzing the relationship among constraints specified in a pattern graph, our approach finds a minimum set of constraints which cover other constraints. An algorithm is proposed to find a solution for the problem. By this way, our approach reduces the given pattern graph to a simpler form. Pattern reduction has the advantage that vertices/edges contained by a vertex/edge are materialized using smaller intermediate results instead of finding matchings from the data graph directly. Thus, it reduces access cost and memory usage.

\textbf{Efficient Execution}. XMiner generates an efficient execution plan to guide its exploration and materialization for a task. XMiner reuses the matchings of edges. So, it does not need to materialize each edge using data graph directly. This results in much lesser data access and computation compared to the state-of-the-art graph matching approaches. Moreover, XMiner postpones the initialization of edges until a suitable time. It reduces redundancy intermediate results, resulting in lesser memory consumption.
\begin{figure}
\centering
{\subfigure[Data Graph]
    {\includegraphics [scale=0.298] {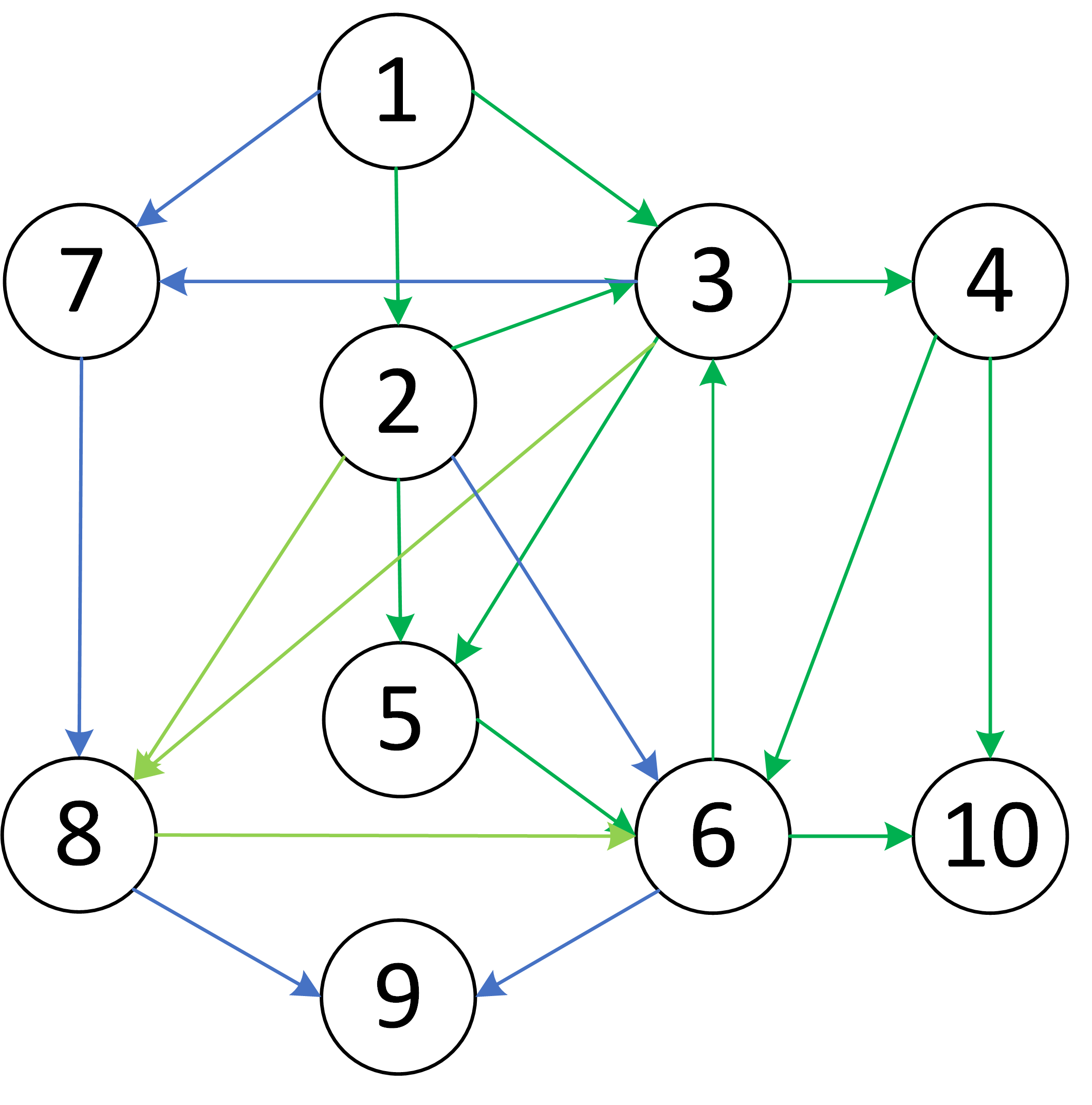}
    \label{fig:datagraph}}
}
\hfil
{\subfigure[Pattern Graph 7\_pa]
    {\includegraphics [scale=0.34] {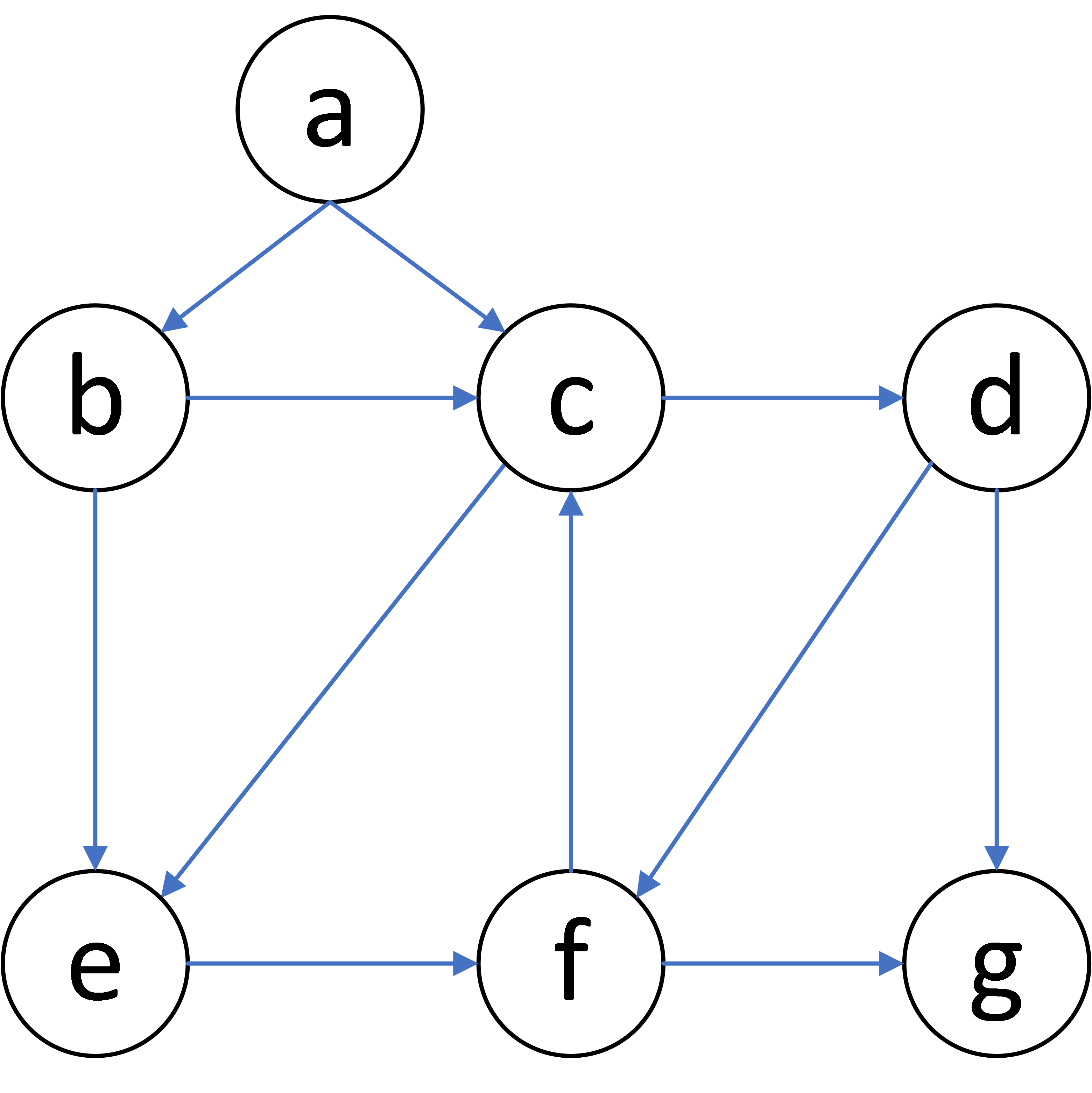}
    \label{fig:rectangle}}
}
\caption{Data Graph and Example Pattern Graph}
\label{fig:pattern}
\vspace{-15pt}
\end{figure}

\textbf{High Performance}. We demonstrate the efficacy of XMiner by evaluating it on 10 different size real-world graphs. Our experimental results shows that XMiner outperforms state-of-the-art undirected graph matching systems. Furthermore, XMiner can scale to large graphs and complex graph matching tasks which could not be efficiently handled by other graph matching systems.

\section{Preliminary}
\label{sec:preliminary}
A directed graph matching approach must ensure that the respective underlying pattern graph and its matches are isomorphic and are oriented the same. Before introducing our approach, we bring up some important concepts. 


\begin{definition}[Digraph]
\label{def:graph}
A directed graph $G = (V, E)$. ($u,v$) indicates a one-way arc with the direction from node $u$ to $v$. 
\end{definition}

A graph matching problem consists in the constraints (or patterns) which impose what nodes and arcs of data graph can be the matchings. 
The constraints are generally represented as a pattern graph.
\begin{definition}[Pattern Graph]
\label{def:patterngraph}
Let $G=(V,E)$ be the data digraph. Pattern graph $\mathbb{G} = (\mathbb{V}, \mathbb{E}, C)$ where $\mathbb{V}=V\bigcup \Upsilon$. $\Upsilon$ is a  variable set. The domain of each $e\in \mathbb{E}$ is $E$ of $G$. 
For $u\in \mathbb{V}$, its in-degree $deg^-(u)$=$\|\{v|(v,u)\in \mathbb{E}\}\|$ and out-degree $deg^+(u)$= $\|\{v|(u,v)\in \mathbb{E}\}\|$. $deg(u)=deg^+(u)+deg^-(u)$. The constraint set 
$C=\mathbb{V} \bigcup \mathbb{E}$ which specifies the allowable values of vertices and edges in $\mathbb{G}$.
\end{definition}


The constraints $C$ of pattern graph $\mathbb{G}$ includes two kinds of constraints: vertex constraint and edge constraint. The vertex constraint places a limitation on a vertex. Degree is a vertex constraint which limits any matching nodes in $G$ have degrees not smaller than the degree of corresponding vertex in $\mathbb{G}$.  
An edge constraint involves two adjacent vertices in $\mathbb{G}$. 
Each vertex of a pattern graph may appear in several edge constraints, which are adjacent constraints of the vertex. Each edge constraint $c_{j}$=($u,v$) specifies the allowable values of $G$ for the vertices appearing in $c_{j}$. Each matching of edge  ($u,v$) should be a link between each node pair in $G$ which matches $u$ to $v$ respectively.  
Fig. \ref{fig:pattern} shows an example data graph and an pattern graph. There are 11 edge constraints in Fig. \ref{fig:rectangle}. The domain of each edge constraint is all arcs of Fig. \ref{fig:datagraph}. However, 
the subgraphs with green lines in Fig. \ref{fig:datagraph} are matches of Fig. \ref{fig:rectangle}. 

Following the idea of relational algebra, we treat node set matching vertex $v$ as a one-column table (denoted as $A^v$) which keeps all the matchings of  vertex $v$. Similarly, we also treat arc set matching edge $e$ as 2-column relational table (denoted as $A^e$) where columns correspond to source and destination of edge $e$. $v_i$ may appear in several edges. The set of all allowable values of $v_i$ under edge constraint $e_j$ is the projection of $A^{e_j}$ over $v_i$, denoted by $\Pi_{v_i}(A^{e_j})$. As a pattern graph has other constraints, some matchings of constraint $c$ ($A^{c_j}$) may be removed due to the conflict with other constraints. Here, the final matchings are denoted as $\check{A}^{c}$. 
If edge constraint $e_{i}$, $e_{j}$ involving common vertices, the matchings of constraint $e_{i}$, $e_{j}$ is to join $A^{e_i}$ and $A^{e_j}$, denoted as $e_i\Join e_j$ for brevity. 

\begin{definition}[Path]
\label{def:path}
A path in $\mathbb{G}$ is a set of directed edges where the destination of an edge is and only is the source of another. Namely, $\forall k (0\leq k <m-2)$, $(v_k, v_{k+1}) \in \mathbb{E}$, edges ($v_0$, $v_1$), ($v_1$, $v_2$)..., ($v_{m-2}$, $v_{m-1}$) form a path between vertex $v_0$ and $v_{m-1}$. The path is also denoted as $<v_0, v_1,..., v_{m-1}>$, a sequence of vertices from one vertex to another using the edges. It can be also denoted as a ssequence of edges. 
If the destination of path $p_1$ is the source of path $p_2$, then we can concatenates $p_1$, $p_2$, denoted as $p_1\bowtie p_2$. The matches of path $P$: $<v_0, v_1,..., v_{m-1}>$ is $A^P$=
$\Join_{0\leq i<m-1} (v_i, v_{i+1})$.
\end{definition}

If the allowable values of a vertex or an edge is the subset of another vertex or edge (e.g. $A^{(b,c)}\subset A^{(a,c)}$), it is not necessary to materialize two vertices/edges independently because the intermediate results can be reused. So, in the following, we characterize relationship between result sets of constraints. 
\begin{definition}[Constraint Inclusion]
\label{def:const-rel}
Let $c_i$, $c_j$ be constraints of $\mathbb{G}$. If $A^{c_i}\subseteq$ $A^{c_j}$, then $c_j$ includes $c_i$, denoted as $c_i \sqsubseteq c_j$. 
If $c_i$, $c_j$ are vertex constraint (e.g. degree), the constraint relationship between them is vertex inclusion. If they are edges, the relationship is edge inclusion.
\end{definition}

Actually, if we ignore other constraints in a pattern graph, any two edge constraints are equal because their domains are all arcs of data graph. However, the matches of two edge constraints may be not same again when other constraints in the pattern graph are considered. So, in the following, we will explore constraint relationships in a given context.
\begin{definition}[Conditional Inclusion]
\label{def:incset}
For $v_i$, $v_j \in \mathbb{G}$, and constraint $c_k$, $c_l \in C$, if $\Pi_{v_{j}}A^{c_l} \subseteq \Pi_{v_i}A^{c_k}$, then $v_i$ conditionally includes $v_j$ respectively under constraint $c_k$, $c_l$, denoted by $(v_j)^{c_l} \sqsubseteq (v_i)^{c_k}$.

Similarly, under constraint $c_k$ edge $e_i$ conditionally includes edge $e_j$ under constraint $c_l$ if $\Pi_{e_{j}}A^{c_l} \subseteq \Pi_{e_i}A^{c_k}$. It is denoted as $(e_j)^{c_l} \sqsubseteq (e_i)^{c_k}$.

We can extend the context with more constraints. If there are two or more constraints, use semicolons to separate them. And, the constraint order indicates  the order in which a constraint is imposed on intermediate results. 
\end{definition}

In Fig. \ref{fig:rectangle}, since  edge constraint ($a, b$) and ($a, c$) share the same source, $(a,c)^{a} = (a,b)^{a}$ if we ignore vertex constraint $b$, $c$. However, when vertex constraint $b$ and $c$ are considered respectively,  $(a,c)^{a;c} \sqsubseteq (a,b)^{a;b}$ because  $c$'s degree is larger than $b$'s degree. So, some matchings of $A^{(a,c)}$ will be removed since they do not match $(a,b)$. It can be shortened to $(a,c) \sqsubseteq (a,b)$ when no misunderstanding occurs. Similarly, $c^{(a,c)} \sqsubseteq b^{(a,b)}$. A context also indicates an exploration order. Different exploration orders may produce different intermediate results, e.g. $(a,c)^{c} \neq (a,c)^{a}$ since $(a,c)^{a}$ means we reduce the embeddings of edge $(a,c)$ after the exploration of vertex  $a$.
\begin{theorem}
\label{lemma:vertexedgetrans}
 Let $e_1: (u_1,v_1)$, $e_2: (u_2,v_2)$ $\in\mathbb{G}$, if $u_1\sqsubseteq u_2$, then $e_1\sqsubseteq e_2$ denoted as $(e_1) ^{u_1}\sqsubseteq (e_2) ^{u_2}$.
 Similarly, if $v_1\sqsubseteq v_2$, then $e_1\sqsubseteq e_2$, denoted as $(e_1) ^{v_1}\sqsubseteq (e_2) ^{v_2}$.
\end{theorem}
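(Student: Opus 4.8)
The plan is to resolve the context-superscript notation into an explicit set of data-graph arcs, at which point the claimed inclusion collapses to a one-line element chase. The crux is the first step, namely fixing the meaning of $(e_1)^{u_1}$. Reading the superscript as the exploration context of Definition~\ref{def:incset} (the text stresses that a context records which constraints have been imposed and in what order), $(e_1)^{u_1}$ is the set of matchings of edge $e_1=(u_1,v_1)$ after the source vertex $u_1$ has been imposed. Since by Definition~\ref{def:patterngraph} the domain of every edge constraint is the full arc set $E$ of $G$, imposing $u_1$ retains exactly those arcs whose source matches $u_1$, so that
\begin{equation*}
A^{(e_1)^{u_1}}=\{(x,y)\in E : x\in A^{u_1}\},\qquad A^{(e_2)^{u_2}}=\{(x,y)\in E : x\in A^{u_2}\}.
\end{equation*}
Equivalently, $A^{(e_1)^{u_1}}=E\Join A^{u_1}$ and $A^{(e_2)^{u_2}}=E\Join A^{u_2}$, each a join of the shared domain with a one-column vertex table on the source column.

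Next I would feed in the hypothesis. By Definition~\ref{def:const-rel} the premise $u_1\sqsubseteq u_2$ is precisely $A^{u_1}\subseteq A^{u_2}$, and monotonicity of the source restriction finishes the argument: for any arc $(x,y)\in A^{(e_1)^{u_1}}$ we have $(x,y)\in E$ and $x\in A^{u_1}\subseteq A^{u_2}$, hence $(x,y)\in A^{(e_2)^{u_2}}$. Thus $A^{(e_1)^{u_1}}\subseteq A^{(e_2)^{u_2}}$, which by Definition~\ref{def:const-rel} is exactly $(e_1)^{u_1}\sqsubseteq(e_2)^{u_2}$, i.e.\ $e_1\sqsubseteq e_2$ in this context.

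The destination case is symmetric: I would replace the source column by the destination column, so that imposing $v_1$ (resp.\ $v_2$) yields $\{(x,y)\in E : y\in A^{v_1}\}$ (resp.\ the same with $A^{v_2}$), and the premise $v_1\sqsubseteq v_2$ gives the inclusion by the identical element chase. Nothing new is required beyond swapping the roles of the two endpoints.

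The real difficulty is bookkeeping rather than mathematics. I must make sure the superscript resolves to a restriction on exactly one column of the common domain $E$, so that the two sides differ only through $A^{u_1}$ versus $A^{u_2}$; that the projection appearing in Definition~\ref{def:incset} is taken on the source column for the $u$-case and the destination column for the $v$-case; and that the context carries no additional, implicitly ordered constraints that could break the clean set-inclusion. Once the notation is pinned down this way, the statement is just the monotonicity of selection under enlarging the admissible endpoint set, and the proof is immediate.
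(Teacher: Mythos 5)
Your proof is correct; the main thing to note is that the paper never writes a proof for Theorem \ref{lemma:vertexedgetrans} at all — it is asserted as an immediate consequence of the definitions — so the natural comparison is with the proof the paper does give for the companion Theorem \ref{lemma:edgetransitive}. There the paper uses precisely your strategy: express the conditioned match set as a (semi)join against the context ($A^{e_2}=e_1\rtimes e_2$, $A^{e_4}=e_3\rtimes e_4$), observe that all edge constraints share the arc set of the data graph as their common domain, and conclude by monotonicity of the join in the included argument. Your unfolding of $(e_1)^{u_1}$ as $\{(x,y)\in E : x\in A^{u_1}\}$, i.e. $E\ltimes A^{u_1}$ on the source column, is the same device, and the element chase from $A^{u_1}\subseteq A^{u_2}$ is then airtight; the destination case is indeed symmetric. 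Your closing caveat — that the contexts must contain nothing beyond $u_1$ and $u_2$ respectively — is not mere bookkeeping but exactly the right concern: the paper's own running example shows the relation is context-sensitive ($(a,c)^{a}=(a,b)^{a}$, yet once destination constraints enter the context one only has $(a,c)^{a;c}\sqsubseteq(a,b)^{a;b}$), so pinning the superscripts down as you did is what makes the statement well-posed in the first place.
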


According to Theorem \ref{lemma:vertexedgetrans}, there exist constraint inclusion relationships among any edges which share same sources or destinations. In Fig. \ref{fig:rectangle}, $(b,c)^{c}\sqsubseteq (a,c)^{c}$. However, $(c,e)\not\sqsubseteq (a,c)$ because their sources or destinations are different. 
We can extend Theorem \ref{lemma:vertexedgetrans} to the inclusion relationship of edge.

\begin{theorem}
\label{lemma:edgetransitive}
 Let $e_1: (u_1,v_1)$, $e_2: (v_1,w_1)$, $e_3: (u_2,v_2)$, $e_4: (v_2,w_2)$ $\in\mathbb{G}$, if $e_1\sqsubseteq e_3$, then $e_2\sqsubseteq e_4$. It is denoted as $(e_2) ^{e_1}\sqsubseteq (e_4) ^{e_3}$.
\end{theorem}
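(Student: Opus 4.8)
The plan is to reduce Theorem~\ref{lemma:edgetransitive} to Theorem~\ref{lemma:vertexedgetrans} by routing the given edge inclusion through the shared vertex of each two‑edge path. The key observation is structural: $e_1=(u_1,v_1)$ and $e_2=(v_1,w_1)$ meet at $v_1$, and $e_3=(u_2,v_2)$, $e_4=(v_2,w_2)$ meet at $v_2$, so in both paths the \emph{destination} of the first edge is exactly the \emph{source} of the second. Hence the hypothesis $e_1\sqsubseteq e_3$ should first be transported onto these common vertices and then pushed forward onto the second edges.

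First I would make the hypothesis explicit: by Definition~\ref{def:const-rel}, $e_1\sqsubseteq e_3$ means $A^{e_1}\subseteq A^{e_3}$ as arc relations over the data graph. Since projection is monotone with respect to set inclusion, projecting both sides onto the destination column gives $\Pi_{v_1}A^{e_1}\subseteq \Pi_{v_2}A^{e_3}$. Both projections are sets of data‑graph nodes, so the inclusion is meaningful even though $v_1$ and $v_2$ are distinct pattern vertices; in the language of Definition~\ref{def:incset} this is precisely the conditional vertex inclusion $(v_1)^{e_1}\sqsubseteq (v_2)^{e_3}$, i.e. the admissible matches of the shared vertex after exploring $e_1$ are contained in those after exploring $e_3$.

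Next I would observe that $v_1$ is the source of $e_2$ and $v_2$ is the source of $e_4$, so the conditional vertex inclusion just obtained is exactly the source‑inclusion hypothesis required by the source version of Theorem~\ref{lemma:vertexedgetrans}. Applying that theorem with the common vertices playing the role of the sources and with contexts $e_1,e_3$ yields $(e_2)^{e_1}\sqsubseteq (e_4)^{e_3}$, which is the claim. Semantically, once the domain of the common vertex is fixed, the matches of the outgoing edge are obtained by expanding along the out‑arcs of that domain, and this out‑arc expansion preserves inclusion.

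The step I expect to carry the real weight is the role alignment in the second paragraph: I must argue that exploring $e_1$ constrains precisely the vertex $v_1$ that serves as the source of $e_2$, so that $(e_2)^{e_1}$ genuinely denotes the out‑arc expansion of $\Pi_{v_1}A^{e_1}$ rather than some other intermediate relation, and likewise for $(e_4)^{e_3}$. Once this identification (destination‑of‑prior‑edge $=$ source‑of‑next‑edge) is pinned down, the remaining monotonicity of projection and of out‑arc expansion is routine, and the proof closes by a single appeal to Theorem~\ref{lemma:vertexedgetrans}.
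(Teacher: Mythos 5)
Your proof is correct and rests on the same two monotonicity facts as the paper's, but it packages them differently. The paper argues directly in relational-algebra terms: it writes $A^{e_2}=e_1\rtimes e_2$ and $A^{e_4}=e_3\rtimes e_4$, observes that the raw domains of $e_2$ and $e_4$ are both the full arc set of the data graph, and concludes $A^{e_2}\subseteq A^{e_4}$ from $A^{e_1}\subseteq A^{e_3}$ by monotonicity of the semijoin in its filtering argument --- a self-contained two-line computation. You instead factor the argument through Theorem~\ref{lemma:vertexedgetrans}: projection monotonicity turns $A^{e_1}\subseteq A^{e_3}$ into the conditional vertex inclusion $(v_1)^{e_1}\sqsubseteq (v_2)^{e_3}$ on the shared vertices, and the source case of that theorem then yields $(e_2)^{e_1}\sqsubseteq (e_4)^{e_3}$. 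Your route is more modular and makes explicit the alignment (destination of the first edge equals source of the second) that the paper leaves implicit in the $\rtimes$ notation; what it costs you is a dependence on Theorem~\ref{lemma:vertexedgetrans}, which the paper states without proof and with an \emph{unconditional} vertex inclusion as hypothesis, so your appeal to it under the contexts $e_1,e_3$ tacitly assumes its conditional-context version --- and that version is exactly the out-arc-expansion (semijoin) monotonicity the paper writes out directly. In substance the two arguments are the same mathematics: yours is a reduction to the prior theorem, the paper's is a direct verification.
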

\begin{proof}
 Since $e_1\sqsubseteq e_3$, then $A^{e_1}\subseteq A^{e_3}$. $A^{e_2}$=$e_1\rtimes e_2$, $A^{e_4}$=$e_3\rtimes e_4$. Because the matchings of all edge patterns are arcs of data graph, so $A^{e_2}\subseteq A^{e_4}$.
\end{proof}

In Theorem \ref{lemma:edgetransitive}, $e_1$ and $e_2$, $e_3$ and $e_4$ are paths, respectively. We also say path $<e_1, e_2> \sqsubseteq <e_3, e_4>$. 
Let $P_1$, $P_2$ be paths, if $P_1 \sqsubseteq P_2$ and $P_2 \sqsubseteq P_1$, then $P_1 = P_2$. In Figure  \ref{fig:rectangle}, for path $P_1$:$<a, c, d>$ and $P_2$:$<a, b, e>$, $P_1$ $ \sqsubseteq$  $P_2$ and $P_2$ $ \sqsubseteq$  $P_1$ because ($a, b$)= ($a, c$) and $(c, d)^{(a, b)}$=$(b, e)^{(a, c)}$.

Constraint relationship indicates relationship between the matchings of edges or vertices. If a digraph matching system knows the constraint relationship, it can reuse intermediate results of other edge constraints instead of materializing each edge or vertex from the storage. Thus, it is necessary to identify all the constraints each edge or vertex includes.




\begin{definition}[Inclusion Set]
\label{def:incset}
Pattern graph $\mathbb{G}$ = ($\mathbb{V}$, $\mathbb{E}$, C). The vertex set included by $v (\in\mathbb{V})$ under $c_k$ is denoted as $I^{c_k}(v)$=\{$u^{c_l}|$ $u^{c_l} \sqsubseteq v^{c_k}$\}. Similarly, the  inclusion set of vertex $v$ under path $p_i$ is $I^{p_i}(v)$=\{$u^{p_j}|$ $u^{p_j} \sqsubseteq v^{p_i}$\}.
The edge inclusion set of edge $e(\in \mathbb{E})$ under $c_k$ is defined as $I^{c_k}(e)$=\{$x^{c_l}|$ $x\in \mathbb{E}$, $x^{c_l} \sqsubseteq e^{c_k}$\}.  
All the edges included by $e$ is defined as $\mathcal{I}(e)$=$\bigcup I^{c}(e)$. 
\end{definition}

For instance, in Figure \ref{fig:rectangle}, under edge constraint ($a,c$), $I^{(a,c)}(a)$= \{$b^{(b,c)}$, $f^{(f,c)}$\}. $I^c(a,c)$= \{$(b,c)^c$, $(f,c)^c$\}.

Here we mainly concern about edge constraint because it implies restrictions on two endpoints. In order to reuse intermediate results and reduce data access, we should find an edge set (i.e. edge Constraint cover) so that all edges in the set contain all the edges of $\mathbb{G}$.
\begin{definition}[Constraint Cover]
\label{def:incset}
Pattern graph $\mathbb{G}$ = ($\mathbb{V}$, $\mathbb{E}$, C). Let $\mathbb{S}\subseteq \mathbb{E}$. $\mathbb{S}$ is an edge constraint cover of $\mathbb{G}$ iff $\forall e\in \mathbb{E}$, $\exists s\in \mathbb{S}$, $e \in \mathcal{I}(s)$.
\end{definition}

Our objective is to find the minimum constraint cover $\mathbb{S}$ such that the amount of data access is minimal. The  constraint covering problem is defined as follows:  $ \underset{\mathbb{S}}{\arg\min}(\sum_{e\in \mathbb{S}} ||A_{e}||)$ (s.t. $\cup_{e\in \mathbb{S}}\mathcal{I}(e)=\mathbb{E}$). Finding minimum constraint cover is a set cover problem, which  is a NP-complete problem. 
In Section \ref{sec:reduction},  an algorithm for the problem is presented.
\section{Overview of XMiner}
\label{sec:overview}
The core idea of XMiner is pattern reduction. With the approach, XMiner can handle complex pattern graphs such as those that have more vertices and edges. In the process of matching the pattern graph, we will face the problems of how to reduce the pattern graph and restore results.

To solve the problem of pattern reduction, we first need to identify the constraint relationship. 
XMiner discovers constraint relationships of edges and vertices according to the concepts in Section \ref{sec:preliminary}. After identifying constraint inclusion sets of a pattern graph $\mathbb{G}$, XMiner reduces $\mathbb{G}$ into a subgraph $\mathbb{G}'$. Note that each edge in $\mathbb{G}'$ corresponds to a set of edges of $\mathbb{G}$ included by it. 
This will be elaborated in Section \ref{sec:reduction}.

XMiner generates an exploration order for $\mathbb{G}'$. Then, the exploration process identifies start nodes of data graph matching the first vertex in pattern graph $\mathbb{G}'$ and then performs parallel search from start nodes. When the search continues, we obtain candidate matchings of the edges or vertices from data graph. If the constraints includes other constraints, then those constraints contained by the constraint can reuse the matchings of the constraint at suitable time. 
This will be elaborated in Sec. \ref{sec:mining}.
\vspace{-10pt}
\section{Pattern Graph Reduction}
\label{sec:reduction}
The idea of pattern reduction is to analyze constraint inclusion relationships and then temporarily remove the edges included by other edges. 

\subsection{Finding Constraint Inclusion Relationship}
\label{sec:constinc}
To reduce the pattern graph, we first compute the constraint inclusion sets for all vertices and edges. Vertices depends on other through corresponding edges. So, we can identify their relationship according to edges. Actually, discovering constraint inclusion sets is to compute the transitive closure of each vertex and edge in the pattern graph. This is accomplished by traversing $\mathbb{G}$ sequentially and constructing the inclusion set of each vertex or edge according to the above theorems in Section \ref{sec:preliminary}. 

XMiner first identifies constraint inclusion relationship of each vertex's neighbors. 
For each vertex $v$, it chooses $v$'s in-neighbor with smallest degree (say $u$) because its matchings include the matchings of other in-neighbors of $v$. Then XMiner adds $v$'s other neighbors to $u$'s inclusion set. At the same time, the in-edges of $v$ are also inserted into the inclusion set of edge ($u,v$).  Similar operations are also applied to the out-neighbors and out-edges of $v$. After identifying neighbors' inclusion sets, XMiner will explore inclusion relationships through paths according to Theorem \ref{lemma:vertexedgetrans} and \ref{lemma:edgetransitive}. 
We grow the constraint inclusion set along traversal path. The step is repeated until no constraint inclusion set is updated. The final result is the closure of inclusion sets. 

Consider the example in Fig.  \ref{fig:rectangle}, we first identify inclusion sets of each vertex and its edges. Vertex $c$ has five neighbors: $a$, $b$, $e$, $f$, and $d$. There are two out-edges from $c$, to $e$ and $d$, respectively, three in-edges from $a$, $b$, and $f$ to $c$. Thus, we first have two edge inclusion sets: $(c, e)^c$ = $(c, d)^c$ and $(a, c)^c$ = $(b, c)^c$=$(f, c)^c$. Further, since $deg(a)<$ $deg(b)<$ $deg(f)$, edge $(f, c)^{c;f}$ $\sqsubseteq$ $(b, c)^{c;b}$ $\sqsubseteq$ $(a, c)^{c;a}$. 
The inclusion set of $a$ includes vertex $b$, and $f$. The same procedure is now applied for other vertices, so that we can identify their inclusion sets and neighbor edges' inclusion sets. Then we traverse the pattern graph to update inclusion sets. For example, we can know $(c,d)^{(b,c)}$ =$(e,f)^{(b,e)}$=$(c,e)^{(b,c)}$ according to Theorem \ref{lemma:vertexedgetrans} and \ref{lemma:edgetransitive}. In other words, $<b,c,d>$ =$<b,e,f>$=$<b,c,e>$. Each member $c_j$ in $c_i$'s inclusion set indicates that all matchings for $c_j$ are included in $c_i$'s intermediate results under the given context. Fig. \ref{fig:rectangle} and \ref{fig:7pplan} shows some inclusion relationships. Edges or vertices with same number (e.g.\Mt{black}{\textcolor{white}1} for vertices or \Mt{white}{1} and \Mt{black}{\textcolor{white}1} for edges) indicate there exist inclusion relationships among them. 
\subsection{Reducing Pattern Graph}
As mentioned in Section \ref{sec:preliminary}, one objective of pattern graph reduction is to find the minimum constraint cover $\mathbb{S}$ such that the amount of data access is minimal. However, it is hard to estimate the total amount of data access accurately since it is actually a selectivity problem. Exhaustive enumeration of all exploration orders of a pattern graph is much too slow. Since the domains of edges and vertices of pattern graph are data graph, the matchings of the edge are fewer if one endpoint of an edge has more restrictions. So, we design a heuristic algorithm (Algorithm \ref{alg:decompattern}) in which 
we reduce a pattern graph by searching the graph from a start vertex with maximal degree and removing the constraints included in other constraints step by step.
Formally, a pattern graph can be reduced in the following manner: i) choose the most selective vertex as a start, then select its edges with the highest coverage and insert the edge into a list. ii) update the constraint inclusion set of another endpoint of the edge. iii) if all edges of the vertex are checked, Its neighbors with maximal degree will be processed. iv) if there exist unvisited edges in the pattern graph, the above process will repeat again.

\begin{algorithm}[ht]
    \caption{Reduce Pattern Graph}
    \label{alg:decompattern}
    \KwIn{Pattern graph $\mathbb{G} = (\mathbb{V}, \mathbb{E}, C)$; The inclusion set $\mathcal{I}$ of each edge and vertex;}
    \KwOut{matching plan $p$;}

    $p.start=\mathbb{G}$.maxDegree(); \label{alg:code:choosestart}

    $p.split=null$;

    searchSplit($p.start$, $p.split$);

    \While{$\exists e\in \mathbb{G}$, $e.visited\neq true$}
    {
        $u=(deg(e.src)\leq deg(e.dest))?e.dest:e.src$;

        searchSplit($u$, $p.split$);
    }
    \BlankLine
    Procedure \textbf{searchSplit}(in $u$, out $s$)

    $v=u$;

    \Repeat {$v$ IS $NULL$}
    {

        \While { $\exists e$  ($e.src==v~ OR~ e.dest==v$) AND ($e.visited \neq true$)}
        {

           $mec$=findMaxECoverage($v$); \label{alg:code:findmaxecoverage}


           $mec.visited$=$true$;

           $s$.enque($mec$); \label{alg:code:enque}

           $w=(mec.src\neq v)?mec.src:mec.dest$;
           updateVInclusion($w, mec$);


        }
        $v$=Neighbor\_MaxDegree(w); \label{alg:code:searchmaxngb} 
    }

    \BlankLine
    Procedure \textbf{updateVInclusion} (in $w$, in $ec$)

        \For {each path P s.t. $( I^{P}(w)\in \mathcal{I}(w))$ }
        {

             remove $ec$ from $P$;

             \For {each path $P'$ s.t. $(y^{P'}\in I^{P}(w))$ }
             {
                  \If {$P'=(v,x)||(x,v)$ and unvisited}
                  {
                       set $P'$ as visited;


                       updateVInclusion($v||x, P'$);
                  }
             }
        }
\end{algorithm}
The algorithm first chooses the vertex having a maximal degree as the start (Line \ref{alg:code:choosestart}). That is, we favor the vertex which has the smallest number of candidate nodes among the vertices in the pattern graph. If more than one vertex has the same maximum degree, we randomly pick one. Consider the pattern graph (Figure \ref{fig:rectangle}) and the data graph (Figure \ref{fig:datagraph}). Due to the degrees of vertices, the possible cardinalities for vertex $a$, $c$ are respectively 7 and 3. If we perform the subgraph search using vertex $a$ first, a lot of partial solutions will be enumerated. However, many of them will be finally discarded due to unmatching with $c$, since the data nodes can not satisfy the constraints (such as degrees) specified in the pattern graph. On the other hand, if we perform the subgraph isomorphism search using $c$, we can prune more impossible matchings.

The algorithm then reduces the pattern graph by finding a constraint cover from the start vertices. The reduced pattern graph is stored in a queue $split$. The algorithm calls procedure \textit{searchSplit} which finds the branches of reduced graph from vertex $v$. \textit{searchSplit} then checks whether there exist unvisited adjacent edges of $v$. If there exist unvisited edges, the algorithm chooses the edge with a high coverage (Line \ref{alg:code:findmaxecoverage}). Then, it adds the selected edges to the queue $s$ which stores the edges and is null initially (Line \ref{alg:code:enque}). The edges in the queue are set as visited and will not be checked again. So, the edge is temporarily removed from the pattern graph. It will influence the constraint inclusion sets containing the edge. \textit{searchSplit} will update the constraint inclusion sets by invoking \textit{updateVInclusion}. For each path in the constraint inclusion sets, \textit{updateVInclusion} will remove the edge from the path. If the path is an edge, the algorithm then labels the edge as visited. If there exist unvisited edges, the algorithm will choose one endpoint with maximum degree of that edge to start a new search (Line \ref{alg:code:searchmaxngb}).
The same procedure repeats on the resultant graphs until all edges are visited. We can continue reduction of the pattern graph in this way. Therefore, when the algorithm has no more unvisited edges of the pattern graph to test, the process will terminate.

Along with this reduction, removed edge are recorded. The reduced pattern graph obtained includes vertices and partial edges linking these vertices, and the degrees on the vertices are directly taken from the original pattern graph. The edges removed are covered by the remained edges. Therefore, any constraints not covered will be reserved in the reduced pattern graph, while at the same time, the pattern graph is reduced to a smaller size,
which makes finding a larger complex pattern graph more easily. During the pattern graph reduction process, we also obtain a exploration order for the constraints.

\begin{figure}
\centering
{\subfigure[7\_pa]
    {\includegraphics [scale=0.3] {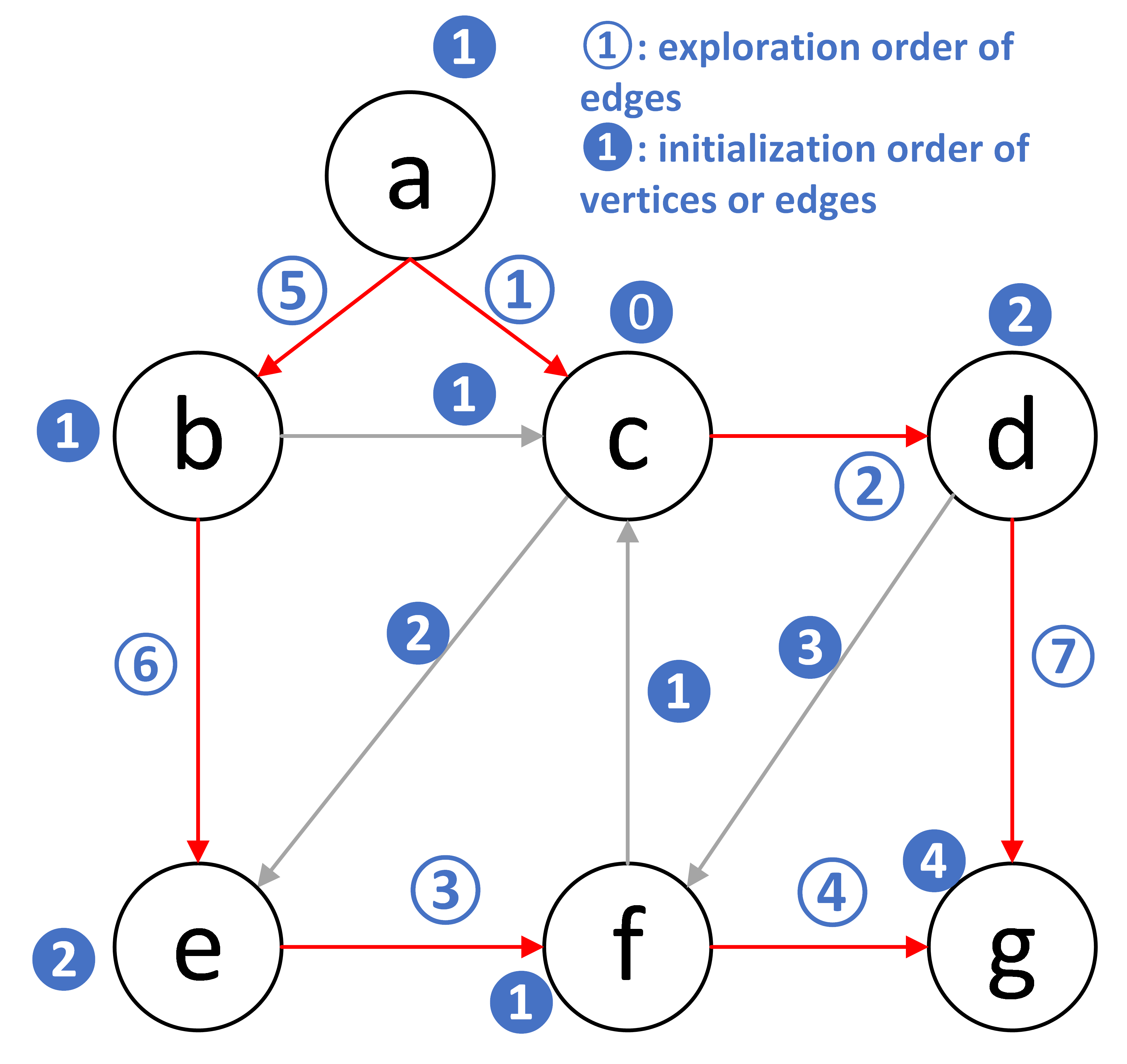}
    \label{fig:rectplan}}
}
\hfil
{\subfigure[7\_pc]
    {\includegraphics [scale=0.27] {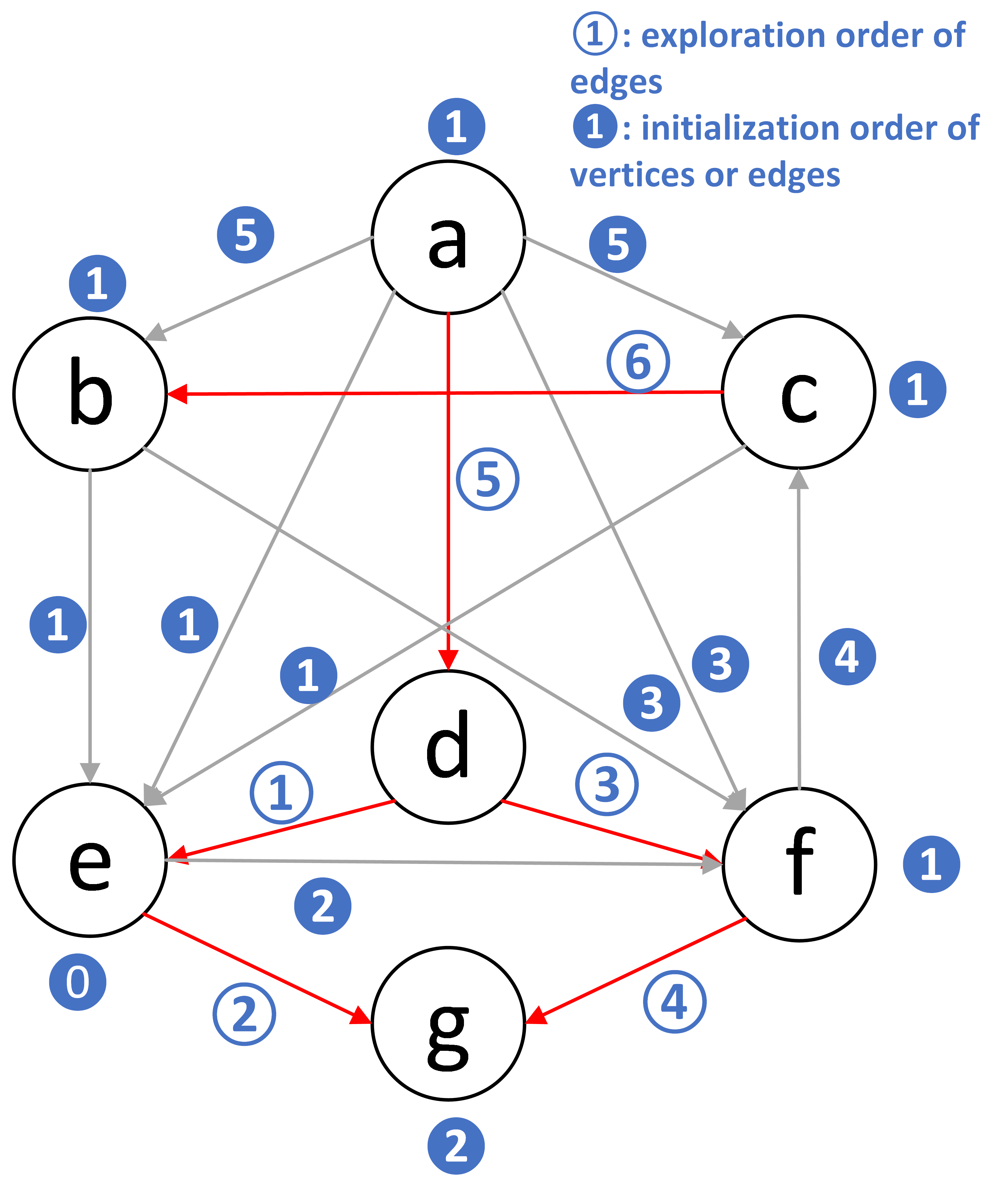}
    \label{fig:7pplan}}
}
\caption{ Reduced Pattern Graphs and Their matching Plans}
\label{fig:plan}
\vspace{-12pt}
\end{figure}
For example, in Figure \ref{fig:rectangle}, we select $c$ to start the reduction because it has the highest degree.  Its neighbor edges are included in two edges ($a$, $c$) and ($c$, $d$), respectively. So, edges ($a$, $c$) and ($c$, $d$) are inserted into the queue. The same procedure is now applied to $f$, $b$, and $d$ so that we add the edges  ($e$, $f$), ($f$, $g$), ($a$, $b$),  ($b$, $e$),and ($d$, $g$) to the queue.
XMiner reduces $\mathbb{G}$ (Figure \ref{fig:rectangle}) into $\mathbb{G}'$ (red lines in Figure \ref{fig:rectplan}). There is a branch where the edges are marked with red solid lines and the edge constraints included by others with gray dashed lines. Figure \ref{fig:7pplan} is the reduced pattern graph (shown in red lines) of pattern graph 7\_pc used in the experiments (Figure \ref{fig:7p}). In the pattern graph, there are more constraint inclusion relationships.
 So, six edges are left and 9 edges are removed temporarily. Each edge $e_i$ (red line) in $\mathbb{G}'$ corresponds to a set of edges (gray lines) in $\mathbb{G}$ which are included by $c_i$ if they have same number label (e.g. \Mt{white}{1} and \Mt{black}{\textcolor{white}1}). 

The algorithm finds a constraint cover by choosing the adjacent edges of a given vertex with the high coverage each step. Thus, the algorithm runs in $O(||\mathbb{E}||)$ time. Along with this reduction, the pattern graph is simplified. Note that sometimes we cannot reduce a pattern graph 
because there do not exist inclusion relationship among its constraints. Such pattern graphs are called irreducible. 
Note that a reduction never remove constraints from the pattern graph forever, but decreases the cost to process some constraints included in other constraints. So, the reduction does not lead to incorrect matching results.
\section{Pattern Matching}
\label{sec:mining}
After the pattern graph is reduced, consequently, the problem of graph matching would be to  find the matchings of the reduced pattern graph. XMiner first seeks for matchings for the reduced pattern graph, and extends the matchings to the edges removed temporarily according to the plan. 

\subsection{Plan Generation and Execution}
\label{sec:exec}
We need to develop a plan for the reduced pattern graph to direct our exploration in the data graph. The execution plan includes two parts: one part indicates an materialization order of vertices or edges, the other part defines the exploration order of edges. 
We choose the same start vertex as the reduction process to begin the generation of the execution plan. The start vertex will be first materialized. Then its adjacent edges with maximal inclusion set will be explored separately.  The edges included by them and the neighbor vertices of the start vertex will be correspondingly materialized. The status of the start vertex and the edges are set to be 'visited'. 
For the unvisited edges, we choose the vertex which is adjacent with the 'visited' vertices and has the maximal degree to start the above process. When all edges in the reduced pattern graph are visited, the plan for the pattern graph is generated. 
This plan guides the exploration of data graph to ensure generated matches are unique. 

For our example pattern graph (Figure \ref{fig:rectangle}), we choose $c$ as the start vertex and then explore edge $(a,c)$ and $(c,d)$ respectively. The edges included by them will be materialized using their intermediate results. Then we choose vertex $f$ to start the next round. The process will  continue until all edges are visited. Its plan is shown in Figure \ref{fig:rectplan} where black circled number (e.g.\Mt{black}{\textcolor{white}1}) indicates the materialization order of vertices or edges, and white circled number (e.g.\Mt{white}{2}) represents the exploration order of edges. In the plan, The ($k$-1)'th step is executed before $k$-th step. The reduced graphs are not larger than original pattern graphs and thus the plans are flatter than the search trees generated using DFS.
\begin{algorithm}[htb]
    \caption{Plan Execution}
    \label{alg:planexec}
    \KwIn{$p$ is the plan;}
    init $A^{p.start}$;

    \For {each $w \in A^{p.start}$ in Parallel}
    { \label{alg:code:paraexplbegin}
        $w.visited =true$;

        \textbf{Explore}($p.split$, $0$, $p.start$, $w$) ;
    } \label{alg:code:paraexplend}

    \BlankLine

    Procedure \textbf{Explore}(in $sp$, in $i$, in $vt$, in $nd$);

        $e$=$sp[i]$;

        \If{e.src==vt}
        {\label{alg:code:initbegin}
            $A^e= nd.out\_arcs$;

            $A^{e.dest}=(||A^{e.dest}||==0)? \prod_{e.dest} A^e$: $A^{e.dest}\cap \prod_{e.dest} A^e$;
        }
        \Else{
            $A^e= nd.in\_arcs$;

            $A^{e.src}=(||A^{e.src}||==0)? \prod_{e.src} A^e$: $A^{e.src}\cap \prod_{e.src} A^e$;
        } \label{alg:code:initend}

        \For{each $I^{P_1}(u||v)\in \mathcal{I}(u||v)$ and $e\in P_1$}
        { \label{alg:code:propbegin}

            \For {$y^{P_2}\in I^{P_1}(u||v)$ }
            {
                \If {$P_2=(u||v,x)||(x,u||v)$}
                {
                    \If {$A^x$ is NULL}
                    {
                        $A^{P_2}=A^e$;
                        $A^x=A^{u||v}$;
                    }
                    \Else{
                        $A^{P_2}\ltimes A^e$; $A^{P_2}\rtimes A^e$;

                        $A^x=A^x\cap A^{u||v}$;
                    }

                    update all inclusion set by removing $P_2$ from constraints;
                }
            }
            remove $e$ from $P_1$;
        }  \label{alg:code:propend}
        $nexte=sp[i+1]$; \label{alg:code:nxtexplbegin}

        \If { ($||A^{nexte.src}||\neq null$) OR ($||A^{nexte.src}||\leq ||A^{nexte.dest}||$) }
        {
            $maplist=A^{nexte.src}$;
            $u=nexte.src$;

        }
        \Else{
            $maplist=A^{nexte.dest}$;
            $u=nexte.dest$;

        }

        \For{each $w\in maplist$ AND $w.visited \neq true$}
        {
            $w.visited =true$;

            \textbf{Explore}($sp, i+1, u, w$);
        }\label{alg:code:nxtexplend}
\end{algorithm}
\vspace{-3pt}

After obtaining the execution plan for the pattern graph, XMiner follows the plan to start the task (Algorithm \ref{alg:planexec}). The first step is to materialize the start vertex using its candidate set derived from the data graph. The mappings of the start vertex should satisfy the degree constraint. 
XMiner then distributes the candidate nodes across threads so that exploration tasks are executed in parallel (Line \ref{alg:code:paraexplbegin}-\ref{alg:code:paraexplend}). Each thread starts its explorations of the data graph from each of the nodes assigned to it one by one. By this way, exploration tasks are mutually independent and convenient for scheduling. So, XMiner does not need synchronization when tasks are executed in parallel by multiple threads. It leads to better multithreading scalability. To avoid stragglers and maximize parallelism, XMiner schedules exploration tasks using work-stealing algorithm \cite{Burton81,worksteal1999} because it can reduce the workload imbalance incurred by the varied search space of exploration tasks.

XMiner subsequently selects first pattern edge $e_i$ given by the matching plan. For each start node $u$, XMiner retrieves matchings of $e_i$ in the data graph whose endpoints are matchings of $u$. After that, XMiner materializes another endpoint of $e_i$ using the matchings of $e_i$ (Line \ref{alg:code:initbegin}-\ref{alg:code:initend}). If edge $e_i$ includes other edges, XMiner will initialize those edges and their endpoints using the matchings of $e_i$ (Line \ref{alg:code:propbegin}-\ref{alg:code:propend}). Then XMiner will get the next edge in the plan and evaluate the matching size of its two endpoints. It will choose the endpoint with fewer matchings and explore its matchings one by one (Line \ref{alg:code:nxtexplbegin}-\ref{alg:code:nxtexplend}). The thread will continue the above process until all the matchings of the start vertex assigned to it are explored. 

\begin{figure}
\centering
\includegraphics [scale=0.5] {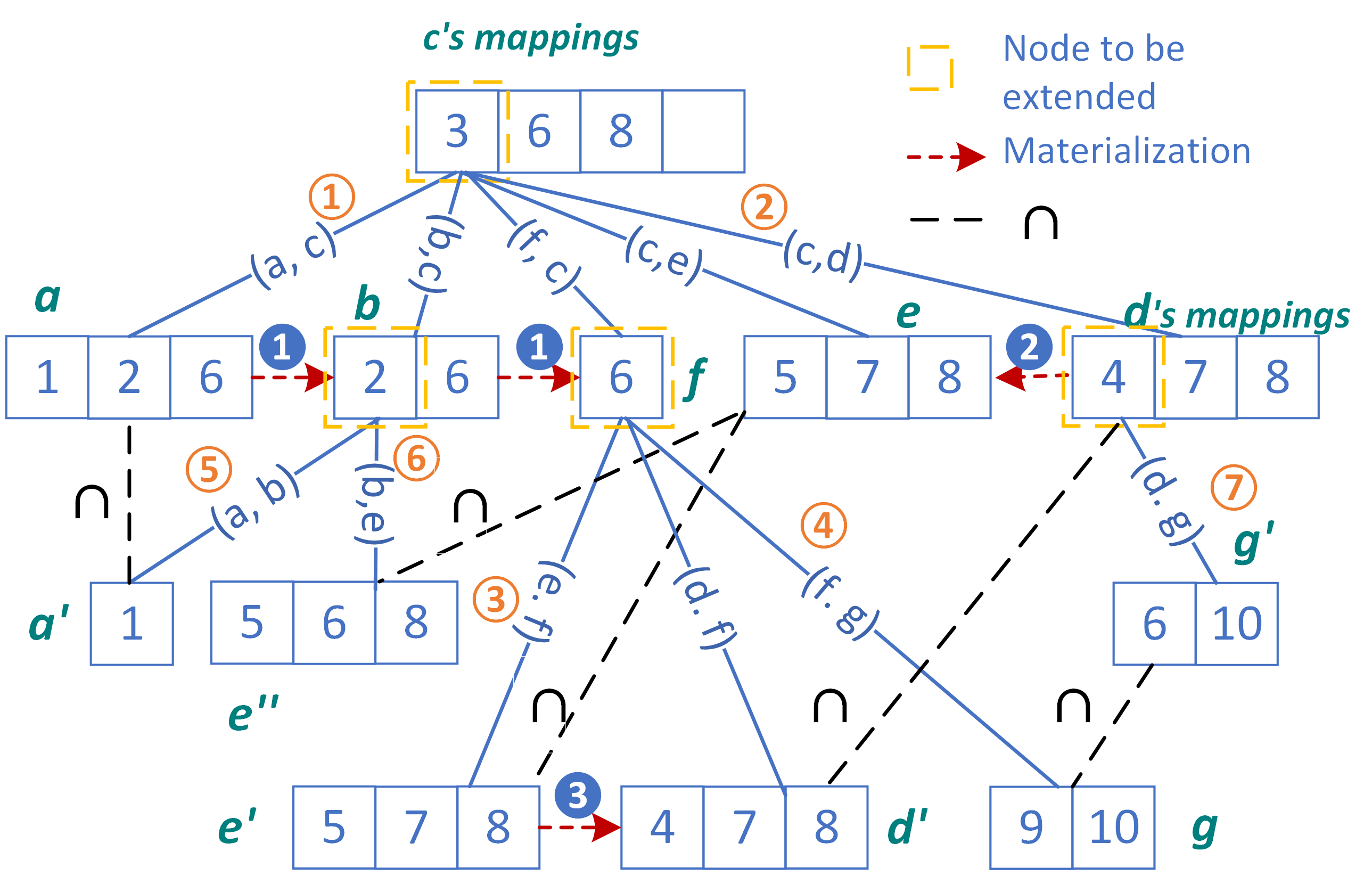}
\caption{The Example Plan Execution}
\label{fig:example}
\vspace{-10pt}
\end{figure}

Considering the running example shown in Figure \ref{fig:rectplan}, XMiner first starts with vertex $c$. XMiner gets nodes from the data graph (Figure \ref{fig:datagraph}) which satisfy condition $c.deg \leq v.deg (v\in G)$ (Step \Mt{black}{\textcolor{white}0}). So, node 3, 6, 8 are initial matchings of $c$ (Figure \ref{fig:example}). Each thread begins the explorations of data graph from the start nodes assigned to it. Here, for simplicity, assume three start nodes are assigned to a thread. The thread will begin its explorations from the start nodes one by one. It first searches the data graph from node 3. In Step \Mt{white}{1}, XMiner finds the matchings of edge ($a, c$) linked to 3 in the data graph. There are 3 arcs matching edge ($a, c$). So, we can get the partial matches of vertex $a$ (Step \Mt{black}{\textcolor{white}1}). Since edge ($b,~c$) and ($f,~c$) are in the inclusion set of edge ($a, c$), so, the matchings of edge ($a, c$) contain the two edges' matchings. As vertex $a$ includes vertex $b$ and $f$ under the corresponding constraints, the matchings of vertex $a$ can be assigned to vertex $b$ and $f$ . When assigning $b$ and $f$, XMiner deletes any matchings that are inconsistent with the constraints for $b$, $f$, respectively (Step \Mt{black}{\textcolor{white}1}). In the next step, XMiner explores edge ($c, d$) and materializes edge ($c, d$) and vertex $d$. After that, XMiner initializes $e$ according to the matchings of $d$. 
In Step \Mt{white}{3} and \Mt{white}{4}, XMiner will search the matching of vertex $f$, i.e. node 6, and then the matchings of $e^{\prime}$ and $g$ are obtained. In Step \Mt{white}{5}-\Mt{white}{7}), the thread will find the matchings  $a^{\prime}$, $e^{\prime\prime}$, and  $g^{\prime}$ by exploring the corresponding edges (For simplicity in presentations, here we temporarily let $a$ denote $A^a$). Following edge ($a, b$), XMiner update the partial matches of $a$ by intersecting two result sets of $a$. Since $a=a\cap a^{\prime}$ is not null and there exist matchings, the process will continue. Then, XMiner computes $e=e\cap e^{\prime\prime}$ (\Mt{white}{6}), and $g=g\cap g^{\prime}$ (\Mt{white}{7}). So, the final matchings of all the vertices are obtained under the given values (shown in yellow box). If XMiner outputs the matchings, the matchings will be combined for output (Section \ref{sec:generation}). Then, XMiner will search other candidate values (e.g. let $b$=6) until all the values are checked. 


\subsection{Intermediate Result Reuse}
During the exploration of data graph, partial matches generated in previous steps will be extended in subsequent steps. So, the amounts of intermediate matches grows rapidly and will be huge. Graph matching systems generally save the intermediate matches for further check. It places a heavy burden on system performance. The idea of constraint inclusion provides an efficient method of result reuse to limit redundant data access, memory usage, and reduce unnecessary computations. Inclusion constraint is first applied as a filter step to delete inconsistent matchings before the beginning of the next extension process.
When XMiner generates a partial matches of an edge, the result is used to materialize other  edges included in its inclusion set at subsequent time. 

Inclusion constraints also  spread the implications of a constraint on an edge or vertex onto other edges or vertices, respectively. For example, edge $e_i$ may be indirectly included by edge $e_k$ (e.g. edge $e_i \sqsubseteq e_j\sqsubseteq e_k$). After $e_k$ is initialized, one way is to materialize $e_i$ and $e_j$ using the matchings of $e_k$. This way incurs redundant computation. So, XMiner initializes $e_i$ using the matchings of $e_j$ instead of $e_k$ after $e_j$ is materialized using $e_k$. 
The method ensures that the values inconsistent with the domain of edges or vertices will not be processed twice. 
Compared to other graph matching systems which initialize once per vertex/edge, our method enables the initialization of vertices or edges in a inclusion set to be executed just once. It avoids much data access, rewrites, and large memory consumption. 
This way can eliminate duplicate storage and redundant computations for intermediate results and improve the performance.

For example, in Figure \ref{fig:rectplan}, 
$f^{(f, c)}$ $\sqsubseteq$ $b^{(b, c)}$ $\sqsubseteq$ $a^{(a, c)}$. After materializing $a$, one general way is to copy its matchings to the buffers for $b$ and $f$. 
However, XMiner does not materialize the vertices immediately, but does it after pruning some inconsistent results. Here, after removing the values of $a$ which are inconsistent with degree constraint of $b$, XMiner assigns the rest of $a$'s matchings to $b$ because constraint $(a, c)$ contains $(b, c)$. 
Similarly, we assign values of $b$ to $f$ after removing inconsistent values with $f$ from $b$. The reason to materialize $f$ using $b$ is that the matchings of $f$ is closer to $b$'s matchings. Although $d$ does not include $e$, edge $(c, d)^{c}$ $\sqsubseteq$ $(c, e)^{c}$. So we materialize both $d$ and $e$ using the matchings of $(c, d)$.
 \vspace{-12pt}
\subsection{Generating Results}
\label{sec:generation}
The matching results in the corresponding data graph must be completely consistent with the given pattern graph. 
The final results should be without repetition. Moreover, 
the mappings of some edges may not exist in the data graph. So, XMiner needs to check each arc and terminates the further exploration if the arc is not a candidate. It is important because the inspection process will guarantee the correctness and unique of the matching results.  

When executing the matching plan (Section \ref{sec:exec}), XMiner will extend the matchings of the chosen vertices (e.g. vertex $c$, $f$, $b$, and $d$). For each node to be extended, XMiner obtains the candidate sets of the rest pattern edges and vertices from the data graph. Next, the candidate sets of these vertices need to be combined to obtain the correct matchings of the pattern graph. 
When generating matchings, XMiner selects them in candidate sets and forms results. Suppose $I_j~(j\leq |\mathbb{V}|)$ is a candidate set of vertex $v_j$. We get a value from each $I_j$ and combine them. By this way, XMiner repeatedly generates all results for the candidate mappings. 
Since XMiner stores mapping of each edge of the pattern graph, so it need not check each arc between two nodes. The current combination results must be valid matchings.

Some graph matching problems need not output the enumerations of all embeddings, but the total number of embeddings. In this case, XMiner just multiplies the sizes of all $I_j$ for the combination of current active nodes and sums all of them. This way can speedup the matching process because it need not hold huge amount of intermediate results in memory for combination.

Consider the example in Figure \ref{fig:example}, 
the current matchings for $c$, $b$, $f$, and $d$ are shown in yellow box. For the current matchings, we will obtain the value set of $a$, $f$, and $g$. So, in this context, XMiner combines the values of $a$, $f$, and $g$ with the current matching nodes of $c$, $b$, $f$, and $d$. So, we obtain two matchings from the example pattern graph (subgraphs with green lines in Figure \ref{fig:datagraph}). XMiner will search data graphs for matchings of $a$, $f$, and $g$ with next combinations of the values of $c$, $b$, $f$, and $d$. The process will terminate until all combinations are evaluated. 
\vspace{-5pt}
\subsection{Implementation}
XMiner is implemented using C++ and Intel$^\circledR$ oneAPI Toolkits, compiled using gcc 7.5.0. 
We run XMiner in parallel by partitioning the matchings of start vertex of the given pattern graph over multiple threads. Concurrent threads then operate on those exploration tasks, each starting at a set of different nodes in the data graph. 
Each thread follows the plan to start the exploration in data graph. During the exploration process, each thread saves possible mappings for each edge and vertex of pattern graph and removes those invalid intermediate results. Since many threads search the input graph in parallel, it is highly possible that a thread may explore those which have been searched by other threads. In order to avoid redundant search, each thread maintains the lightweight boundary information regarding its exploration tasks. When the exploration reaches the boundary, the thread will not continue the search along the direction and turn to other directions for search if there exist un-visited areas.

XMiner allows users to directly express pattern graphs by giving all edges. Then XMiner analyzes the pattern graph, reduces the pattern graph and generates the efficient matching plan for the execution. By this way, users can describe pattern graph easily and do not care about how to efficiently handle it. It is useful for users unfamiliar with programming.

\section{Evaluation}
\label{sec:eval}
\subsection{Experimental Settings}

\textbf{\textit{Competitors.}} Although there emerge many graph matching algorithms and systems, they are designed for un-directed graphs or specific hardware \cite{PBE,GSI,PBE2,GAMMA,G2Miner} and no graph matching systems for directed graph on a single machine are available. So, we choose un-directed graph matching systems as competitors even though matching directed graph is more complex than matching un-directed graph as we analyze in Section \ref{sec:intro}. Undirected graph matching systems Peregrine \cite{Peregrine} and GraphPi \cite{GraphPi} are much faster than previous systems such as Arabesque \cite{Arabesque}, RStream \cite{RStream18}, G-Miner \cite{GMiner}, and Fractal \cite{Fractal}. SumPA \cite{SumPA} shows better performance than Peregrine and GraphPi. However, its code is not available. So, we run directed versions of same pattern graphs on same data sets reported in \cite{SumPA}. Our speedups (XMiner/Peregrine) are larger than SumPA's speedups (SumPA/Peregrine) even though SumPA is designed for undirected graph. Sandslash \cite{Sandslash} can not import graphs. According to the above reasons and the suggestions from authors of \cite{AutoMine}, we choose Peregrine and GraphPi as competitors. 

\newsavebox{\datasetbox}
\begin{lrbox}{\datasetbox}
\vspace{-10pt}
    \begin{tabular}{lrrcr}
        \hline
Datasets   & Nodes  & Arcs  & \begin{tabular}[c]{@{}l@{}}Average \\ Degree\end{tabular} & Size      \\
        \hline
Ego-Facebook\cite{Ego-Facebook} & 4,039    & 88,234      & 44         & 836KB    \\
Wiki-Vote\cite{Wiki-Vote}  & 7,115     & 103,689      & 29              & 968KB    \\
Cit-HepTh\cite{Cit-HepTh}  & 27,770    & 352,807     & 25              & 5.32MB    \\
Enron\cite{LWA}           & 69,244    & 276,143     & 8          & 3.13MB   \\
Web-BerkStan\cite{Web-Google}       & 685,230   & 7,600,595   & 22          & 105.03MB  \\
Web-Google\cite{Web-Google}     & 875,713   & 5,105,039   & 12         & 71.89MB  \\
Itwiki\cite{LWA}     & 1,016,867 & 25,619,926  & 50         & 339.49MB \\
Orkut\cite{Orkut}     & 3,072,441 & 117,185,083  & 76         & 1.64GB \\
Enwiki\cite{LWA}     & 6,047,510 & 142,691,609 & 47         & 2.09GB \\
Friendster\cite{Friendster}     & 65,608,366 & 1,806,067,135 & 55         & 30.14GB \\
\hline
\end{tabular}
\end{lrbox}
\begin{table}[htb]
\centering
\caption{Dataset characteristics}
\label{tab:dataset}
\scalebox{0.8}{\usebox{\datasetbox}}
\vspace{-10pt}
\end{table}\textbf{\textit{Datasets.}} In our experiments, we choose 10 datasets \cite{GraphPi,Peregrine} which have sizes ranging from a low of 88K arcs to a high of 1.8B arcs (Table \ref{tab:dataset}). The data sets are widely used for evaluation \cite{GraphPi,Peregrine}. \textit{Wiki-Vote} is a Wikipedia editorial voting graph. \textit{Cit-HepTh} is a paper citation network graph, and \textit{Enron} is a mail network graph. \textit{Web-Google} is a web link network.  \textit{Ego-Facebook, Itwiki, Orkut, Enwiki}, and \textit{Friendster} are social networks. Among them, although Peregrine reported its performance on \textit{Orkut} \cite{Peregrine}, we can not run it on \textit{Orkut} successfully.

\textbf{\textit{Patterns.}} We choose six pattern graphs which have vertex sizes ranging from 3 to 7 (Figure~\ref{fig:PatternGraph}). Each pattern graph is named according to its node size. For example, pattern graph 3\_p has three vertices.
The last two patterns have 7 vertices, so we name them as 7\_pb and 7\_pc respectively since the example pattern graph is 7\_pa. With the increase of the number of vertices and edges, the complexity of pattern graphs also increases rapidly. 
For fair comparison, the undirected versions of 6\_p, 7\_pb, and 7\_pc are from \cite{GraphPi}, the undirected versions of 4\_p, and 5\_p are similar to the pattern graphs used in the competitors \cite{GraphPi,Peregrine}. The undirected versions of 3\_p is a triangle counting task. 
Since the competitors are designed for undirected graph, the competitors will mine the un-directed versions.
\vspace{-1pt}
\begin{figure}[htbp]
\centering
\subfigure[3\_p]{
\includegraphics[width=2.3cm]{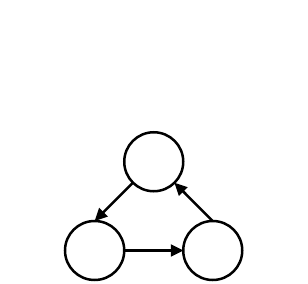}
\label{fig:3p}
}
\quad
\subfigure[4\_p]{
\includegraphics[width=2.3cm]{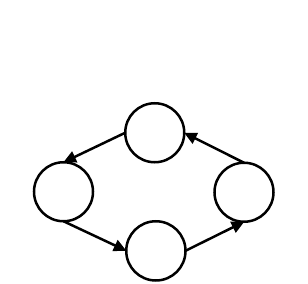}
\label{fig:4p}
}
\quad
\subfigure[5\_p]{
\includegraphics[width=2.3cm]{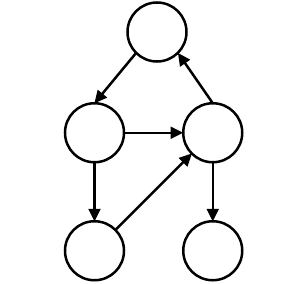}
\label{fig:5p}
}
\quad
\subfigure[6\_p]{
\includegraphics[trim=0 10 0 10,width=2.3cm]{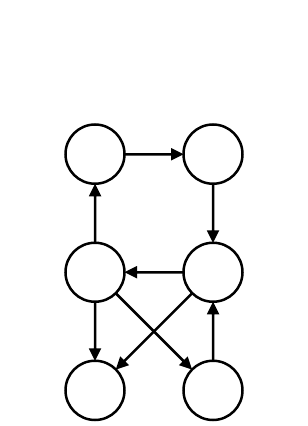}
\label{fig:6p}
}
\quad
\subfigure[7\_pb]{
\includegraphics[trim=0 10 0 10, width=2.3cm]{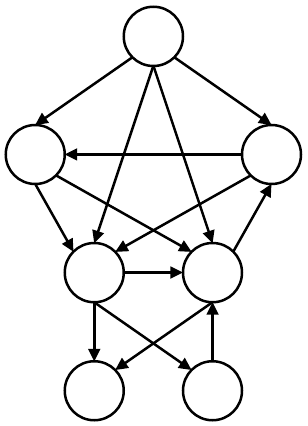}
\label{fig:8p}
}
\quad
\subfigure[7\_pc]{
\includegraphics[trim=0 10 0 10, width=2.3cm]{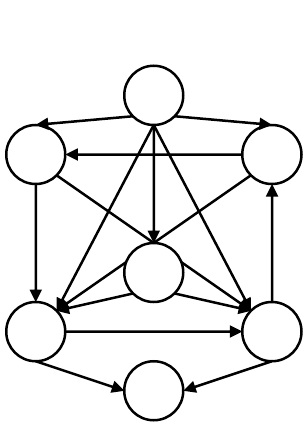}
\label{fig:7p}
}
\caption{Pattern Graphs}
\label{fig:PatternGraph}
\end{figure}

All the experiments run on a server with two Intel 14-core 2.0GHz CPUs, 256GB RAM, and 500GB disk.  The operating system is Ubuntu 18.04.5. We run each task on each dataset three times and average the execution times. The time reported in our evaluation does not include  loading time and pre-processing time of data sets.
\subsection{Performance Comparison}

\subsubsection{Data Graph}
We roughly divide data graphs into three groups: small graphs ($<$1 million arcs), medium graphs (1$\sim $100 million arcs), and large graphs ($>$100 million arcs). 
Table \ref{tab:perfsmall}, \ref{tab:perfmedium}, \ref{tab:perflarge} shows the performance.

\textbf{Small Graphs.} If a system runs a matching task for more than 5 hours on small and medium graphs, we will terminate the execution. XMiner and Peregrine can load four datasets while GraphPi can only process two small datasets. However, Peregrine need more than 5 hours to match 7\_pc on three datasets. For the  tasks less than 5 hours, XMiner outperforms GraphPi by 2.4 times (3\_p on Wiki-Vote)-165713 times (7\_pc on Ego-Facebook), and outperforms Peregrine by 1.8 times (3\_p on Wiki-Vote) - 185271 times (7\_pb on Ego-Facebook) in the running time. 
Peregrine slightly outperforms XMiner when matching 3\_p on Cit-HepTh. 
Although Ego-Facebook is the smallest graph, two competitors do not show better performance on Ego-Facebook. One reason is that its average degree is larger than other 5 data sets. Despite this, compared to the competitors, the performance of XMiner varies slightly.  
\newsavebox{\perfboxsmall}
\begin{lrbox}{\perfboxsmall}
\begin{tabular}{clrrr}
\hline
{Data Graph}   & {Pattern Graph} & XMiner & GraphPi & Peregrine  \\
\hline
        {Ego-Facebook}
&	3\_p	&  \textbf{0.006}  &	0.019 	&	0.011 	  \\
&	4\_p	&  \textbf{0.043}  &	0.137 	&	0.222 	  \\
&	5\_p	&  \textbf{0.060}  &	1.862 	&	23.379 	  \\
&	6\_p	&  \textbf{0.088}  &	153.692 	&	1089.760 	  \\
&	7\_pb	&  \textbf{0.086}  &	4910.986	&	15927.2	  \\
&	7\_pc	&  \textbf{0.058}  &	9554.014 	&	5 h+	  \\
        {Wiki-Vote}
&	3\_p	&  \textbf{0.007}  &	0.017 	&	0.013 	  \\
&	4\_p	&  \textbf{0.070}  &	0.412 	&	0.499 	  \\
&	5\_p	&  \textbf{0.486}  &	0.551 	&	9.900 	  \\
&	6\_p	&  \textbf{6.372}  &	112.789 	&	132.448 	  \\
&	7\_pb	&  \textbf{1.875}  &	56.473	&	84.954	  \\
&	7\_pc	&  \textbf{2.730}  &	47.444 	&	182.402 	  \\
        {Cit-HepTh}
&	3\_p	&  	0.035 	 & 	--	& \textbf{0.027} \\
&	4\_p	&  \textbf{0.099}  &	--	&	0.981 	  \\
&	5\_p	&  \textbf{0.178}  &	--	&	72.126 	  \\
&	6\_p	&  \textbf{0.278}  &	--	&	104.044 	  \\
&	7\_pb	&  \textbf{0.234}  &	--	&	177.177	  \\
&	7\_pc	&  \textbf{0.180}  &	--	&	5 h+	  \\
        {Enron}
&	3\_p	&  \textbf{0.015}  &	--	&	0.023 	  \\
&	4\_p	&  \textbf{0.251}  &	--	&	1.171 	  \\
&	5\_p	&  \textbf{3.417}  &	--	&	26.104 	  \\
&	6\_p	&  \textbf{27.917}  &	--	&	255.802 	  \\
&	7\_pb	&  \textbf{40.934}  &	--	&	657.993	  \\
&	7\_pc	&  \textbf{93.929}  &	--	&	5 h+	  \\
                                \hline
\end{tabular}
\end{lrbox}
\begin{table}[htbp]
\centering
\caption{Performance on Small Graphs (in seconds)}
\label{tab:perfsmall}
\scalebox{0.8}{\usebox{\perfboxsmall}}
\parbox{14.1cm}{\footnotesize{-- indicates the system can not load the graph.} }
\end{table}

\textbf{Medium Graphs.} 
It takes more than 5 hours for the competitors to run the complex tasks (e.g. GraphPi matches 6\_p, 7\_pb, and 7\_pc on Web-BerkStan). Three systems can successfully process all pattern graphs on Web-Google except that Peregrine runs timeout on 7\_pc. For the tasks which are not out of time, XMiner outperforms GraphPi by 2.3 times (5\_p on Web-Google)-69.7 times (5\_p on Web-BerkStan), and outperforms Peregrine by 2.2 times (3\_p on Web-Google)) - 269.3 times (5\_p on Web-Google) in the running time.
\vspace{-2pt}
\newsavebox{\perfboxmid}
\begin{lrbox}{\perfboxmid}
\begin{tabular}{clrrr}
\hline
{Data Graph}   & {Pattern Graph} & XMiner & GraphPi & Peregrine  \\
\hline
        {Web-BerkStan}
&	3\_p	&  \textbf{0.082}  &	1.310 	&	--	  \\
&	4\_p	&  \textbf{4.454}  &	39.057 	&	--	  \\
&	5\_p	&  \textbf{19.767}  &	1377.565 	&	--	  \\
&	6\_p	&  \textbf{72.531}  &	5 h+	&	--	  \\
&	7\_pb	&  \textbf{116.148}  &	5 h+	&	--	  \\
&	7\_pc	&  \textbf{197.783}  &	5 h+	&	--	  \\
        {Web-Google}
&	3\_p	&  \textbf{0.072}  &	0.361 	&	0.161 	  \\
&	4\_p	&  \textbf{0.213}  &	0.691 	&	14.781 	  \\
&	5\_p	&  \textbf{2.189}  &	4.990 	&	589.590 	  \\
&	6\_p	&  \textbf{19.945}  &	1385.263 	&	1574.880 	  \\
&	7\_pb	&  \textbf{66.629}  &	268.428	&	2116.41	  \\
&	7\_pc	&  \textbf{120.553}  &	361.278 	&	5 h+	  \\
        {Itwiki}
&	3\_p	&  \textbf{0.958}  &	--	&	3.837 	  \\
&	4\_p	&  \textbf{148.549}  &	--	&	3304.360 	  \\
&	5\_p	&  \textbf{196.193}  &	--	&	5 h+	  \\
&	6\_p	&  \textbf{2487.970}  &	--	&	5 h+	  \\
&	7\_pb	&  \textbf{426.024}  &	--	&	5 h+	  \\
&	7\_pc	&  \textbf{1272.540}  &	--	&	5 h+	  \\
                    \hline
\end{tabular}
\end{lrbox}
\begin{table}[ht]
\centering
\caption{Performance on Medium  Graphs (in seconds)}
\label{tab:perfmedium}
\scalebox{0.8}{\usebox{\perfboxmid}}
\vspace{-2pt}
\end{table}

\textbf{Large Graphs.} XMiner also shows the best performance on large graphs (Table \ref{tab:perflarge}). For the large graphs, the average degrees are 47-76 which is much larger than most of small and medium graphs. Their maximal degrees are also very large. For example, Enwiki's maximum out-degree and maximum in-degree are 236,348 and 10,694,  respectively.  So, it is difficult for graph matching systems to process them because they will incur large amount of search and huge intermediate results. GraphPi can process 2 large graphs, and Peregrine can only process one large input graph (Table \ref{tab:perflarge}). For large graphs, XMiner is 4.5 times and 3.3 times better than GraphPi and Peregrine respectively when matching small pattern graphs. On Orkut, GraphPi takes more than 24 hours to process complex patterns, while XMiner only needs less than 116 seconds for the tasks. The reason is that XMiner employs the pattern graph reduction method, and designs the corresponding  execution mechanism. So it can process data graphs of different sizes and complete the matching tasks well. Compared with other systems, XMiner supports processing complex pattern graphs on larger graphs, thus has good applicability. 
\vspace{-1pt}
\newsavebox{\perfboxlarge}
\begin{lrbox}{\perfboxlarge}
\begin{tabular}{clrrr}
\hline
{Data Graph}   & {Pattern Graph} & XMiner & GraphPi & Peregrine  \\
\hline
        {Orkut}
&	3\_p	&  \textbf{2.896}  &	12.999 	&	--	  \\
&	4\_p	&  \textbf{559.143}  &	1146.607 	&	--	  \\
&	5\_p	&  \textbf{50.835}  &	657.341 	&	--	  \\
&	6\_p	&  \textbf{100.167}  &	24 h+	&	--	  \\
&	7\_pb	&  \textbf{115.478}  &	24 h+	&	--	  \\
&	7\_pc	&  \textbf{69.503}  &	24 h+	&	--	  \\
        {Enwiki}
&	3\_p	&  \textbf{6.153}  &	--	&	--	  \\
&	4\_p	&  \textbf{776.628}  &	--	&	--	  \\
&	5\_p	&  \textbf{1031.360}  &	--	&	--	  \\
&	6\_p	&  \textbf{34228.100}  &	--	&	--	  \\
&	7\_pb	&  \textbf{30818.100}  &	--	&	--	  \\
&	7\_pc	&  \textbf{55984.300}  &	--	&	--	  \\

         {Friendster}
&	3\_p	&  \textbf{80.246}  &	258.785  	&	194.253  	  \\
&	4\_p	&  \textbf{11567.200}  & 32480.943 	&	67012.800  	  \\
&	5\_p	&  \textbf{880.359}  &	2323.516	&	89425.3	  \\
&	6\_p	&  \textbf{2147.33}  &	24h+	&	24h+	  \\
&	7\_pb	&  \textbf{1256.98}  &	24h+	&	24h+	  \\
&	7\_pc	&  \textbf{948.355}  &	24h+	&	24h+	  \\
\hline

\end{tabular}
\end{lrbox}
\begin{table}[htbp]
\centering
\caption{Performance on Large Graphs (in seconds)}
\label{tab:perflarge}
\scalebox{0.8}{\usebox{\perfboxlarge}}
\end{table}

On average, XMiner outperforms GraphPi  by 8329.7$\times$ and Peregrine by 6677$\times$,  respectively (success tasks). 
\subsubsection{Pattern Graphs}
We refer Figure \ref{fig:3p}, \ref{fig:4p}, \ref{fig:5p} as small pattern graphs. 
and pattern graphs (Figure \ref{fig:6p}-\ref{fig:7p}) 
as complex pattern graphs.

\textbf{Small pattern graphs.}  Undirected version of 3\_P and 4\_p are common patterns widely used in evaluation. 
XMiner outperforms both GraphPi and Peregrine by a large margin in most cases except that XMiner is slightly slower than Peregrine when matching 3\_p on Cit-HepTh.  When matching 5\_P, XMiner is much faster than the two competitors, especially the large graph. One reason is that XMiner reduces the patterns. Since small pattern contains few vertices and edges, XMiner can not remove many edges as it does on complex patterns. So, XMiner does not improve performance much by finding constraint relationships.

\textbf{Complex pattern graphs.}  
As the complexity of pattern graphs increases, search spaces are becoming larger, and thus the sizes of intermediate results grow rapidly as well. 
Three systems will require more time to find all the embeddings of the patterns.
The competitors time out when they run some complex matching tasks. For example, both Peregrine and GraphPi time out when they process pattern graph 6\_p, 7\_pb, and 7\_pc on large data graphs although the undirected versions of the pattern graphs have symmetry structures. When matching the complex pattern graphs, XMiner has significant advantages over both Peregrine and GraphPi. Specifically, XMiner is much faster than Peregrine by 185271$\times$ when matching 7\_pb on Ego-Facebook and faster than GraphPi by up to 6 orders of magnitude (165712.9$\times$) when matching 7\_pb on Ego-Facebook. When processing 6\_p, 7\_pb, and 7\_pc, Peregrine and GraphPi require more than 24 hours on large graphs except those unsuccessful tasks. On the contrary, XMiner finishes the tasks matching 7\_pc in 0.058  (Ego-Facebook) - 55984.3 (Enwiki) seconds. The reason is that XMiner explores data graphs using reduced pattern graphs and other edges included by reduced graphs can be materialized using  intermediate results. 
\subsection{Performance Impact of Pattern Reduction}
In order to  evaluate pattern reduction effect on performance, we compare single threaded XMiner with undirected graph matching algorithm QGLfs and RIfs \cite{SunSIGMOD20} whose codes are available. QGLfs and RIfs are single-threaded program. We choose three data sets: \textit{Ego-Facebook}, \textit{enron}, \textit{web-Google}, and \textit{Orkut}, which are from small, medium, and large datasets, respectively. QGLfs and RIfs time out or fail when they handle complex patterns or large data sets. For example, it take QGLfs and RIfs more than 5 hours to run 4\_p on \textit{Orkut}. XMiner can process 6 tasks on three datasets. For the successful tasks, the speedups (competitor/XMiner) are between 1.03 (3\_p on \textit{enron})-8.36 (5\_p on \textit{Orkut}). 

We also implement a simplified XMiner without pattern reduction. The simplified XMiner generates a plan for each original pattern graph and executes each plan as XMiner does.  We run four tasks (5\_p, 6\_p, 7\_pb, and 7\_pc) on dataset \textit{enron}, \textit{web-Google}, and \textit{Orkut} because  complex tasks can show the effect of pattern reduction clearly. The results (Figure \ref{fig:timeworeduction}) show XMiner is much faster than XMiner without pattern reduction by 2.2 (6\_p on \textit{web-Google})-32.21 (5\_p on \textit{Orkut}). 
\begin{figure}[htb]
    \centering
    \includegraphics[trim=0 20 0 10, scale=0.27]{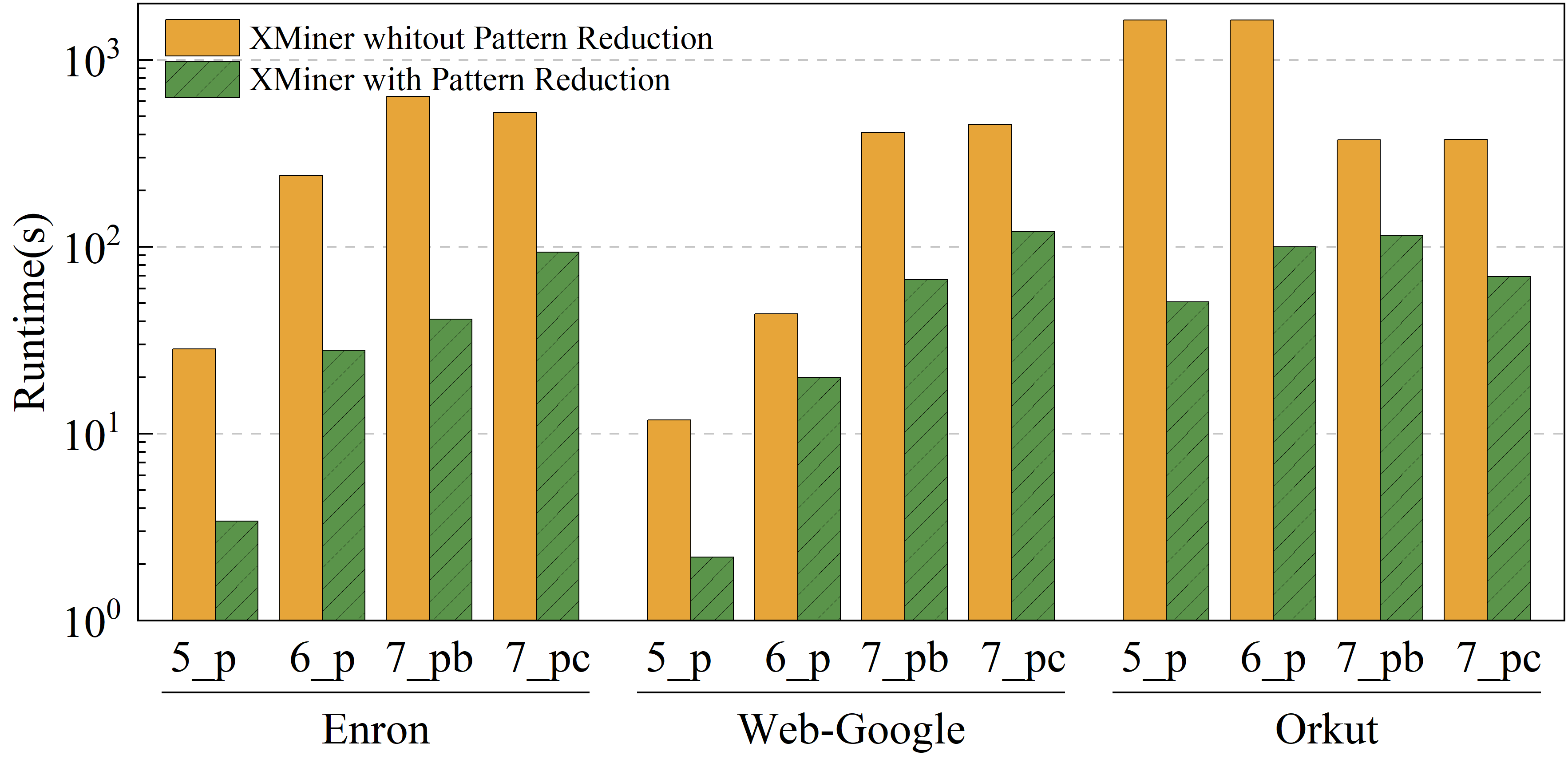}
    \caption{XMiner with/without Pattern Reduction}
    \label{fig:timeworeduction}
\end{figure}
\begin{figure}[htb]
\centering
\includegraphics[trim=0 20 0 10,scale=0.27]{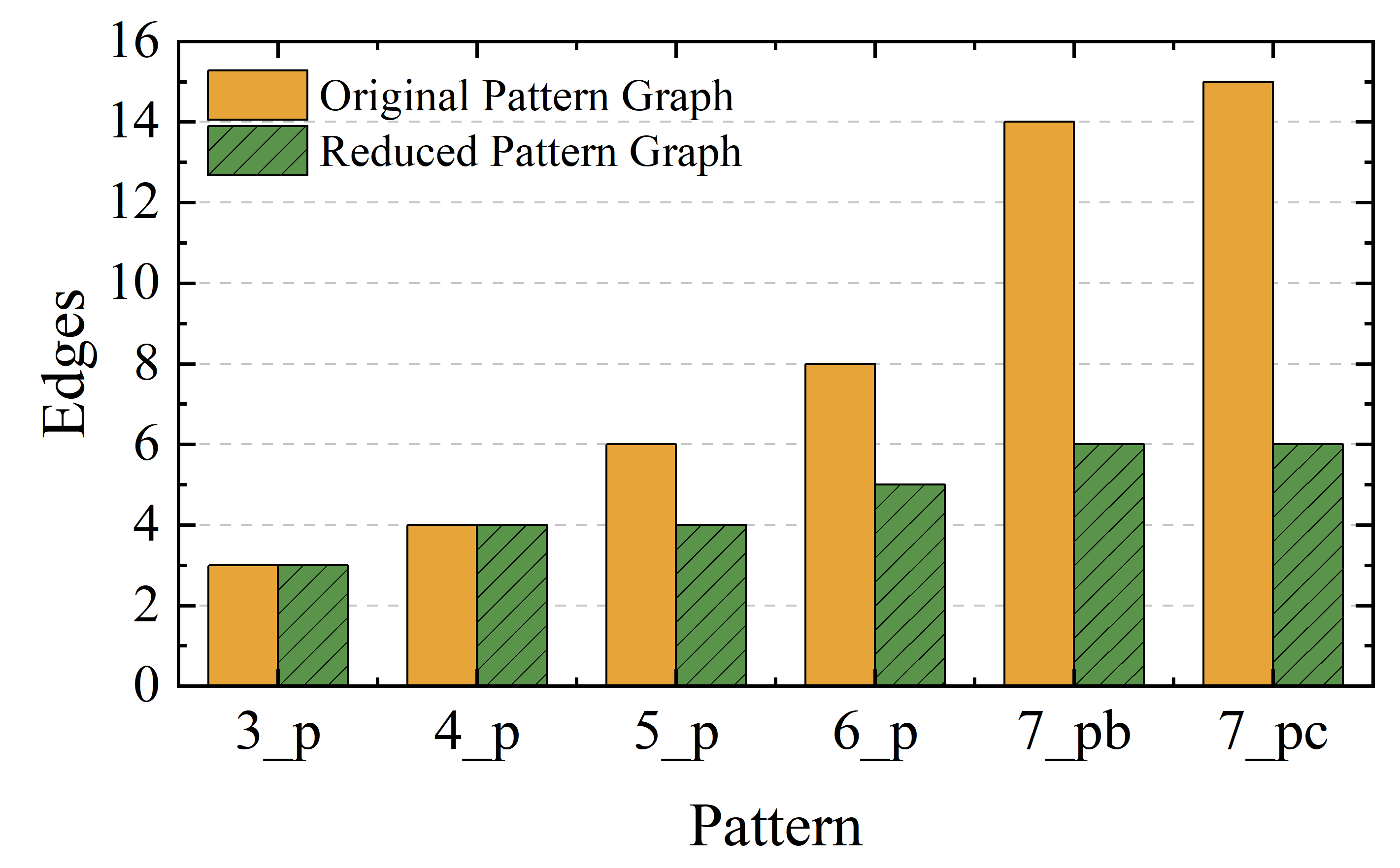}
\caption{Pattern Graphs after Reduction}
\label{fig:reducedpatternsize}
\end{figure}

We records the number of edges of reduced pattern graphs (Figure \ref{fig:reducedpatternsize}). The edge numbers of reduced pattern graphs are between 3 and 6. 
Compared to the original pattern graphs, the more complex a pattern graph is, the more edges it will be temporarily reduced. For example, reduced 7\_pc only has 6 edges while the original 7\_pc has 15 edges. Reduced 7\_pb also has 6 edges left. However, both 3\_P and 4\_P are loops in which their vertices and edges are equivalent, respectively. So, XMiner does not reduce them.
\subsection{Scalability}
\label{sec:scal}
We vary the thread number to evaluate its scalability under different threads. The number of threads is set to be 1, 2, 4, 8, 16, and 28, respectively because the server has 28 physical cores. We choose to mine 4\_p on data set \textit{Web-Google} because our pattern reduction approach can not reduce many intermediate results comparing to the symmetry breaking used in the two competitors. And the competitors can also process it. The speedups, which are the ratio of the running time by one thread to that by the given number of threads, are shown in Fig.~\ref{fig:Scalability3}. In general, the speedup increases as the number of
threads increases. 
GraphPi performs not better than other two systems. When the number of threads increases from 1 to 8, the XMiner's speedups are slightly better than Peregrine's speedups. When the number of threads grows from 8 to 28, the speedups of XMiner increase from 7.634 to 24.339, while Peregrine increased from 7.550 to 23.089. 
\begin{figure}[htbp]
    \centering
    \includegraphics[trim=0 20 0 0, clip, scale=0.26]{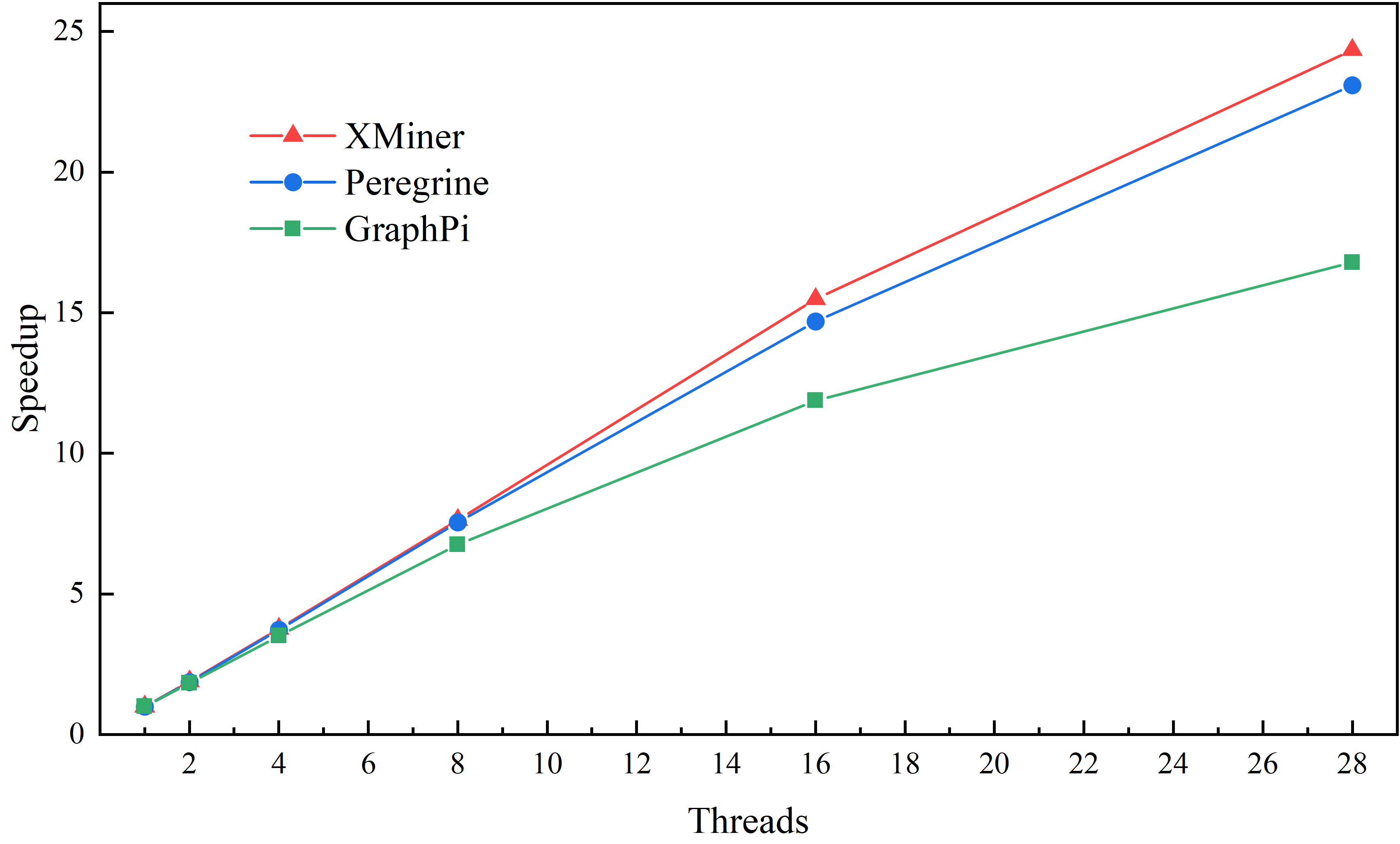}
    \caption{Speedup}
    \label{fig:Scalability3}
\end{figure}
\vspace{-2pt}

XMiner can scale to large search space by pattern reduction. To demonstrate this scalability, we make XMiner to start its exploration from the given percentage of start nodes. If we randomly extract the given sizes of data from Orkut, some results may be not valid again because of incompleteness and less connectivity in the generated datasets.
Here, we randomly choose the given percents of start nodes matching the start vertices in pattern graphs. Then we search Orkut through BFS from those start nodes until BFS can not continue. The returned subgraphs form a data set.

By this way , search space will be same as original. We generate ten sets of exploration tasks which contain 10\%-100\% of start nodes, respectively for each pattern graph. On these exploration task sets, XMiner mines pattern 5\_p-7\_pc, respectively. The two competitors have their own exploration strategies and may start the explorations from other vertices. So, here we do not compare with them. The execution times are shown in Figure \ref{fig:Scalability2}. With the increase of the percentage, the running time is about linear growth, except for 20\% dataset. The reason is that the number of matchings of subgraphs originated from each start node varies. So, the size of matchings of datasets is not in line with the percentage of start nodes. Its running time changes sharply when the percentage is low. When the percentage grows, the curve is smoother.
\begin{figure}[htb]
    \centering
    \includegraphics[trim=0 20 0 0, clip, scale=0.27]{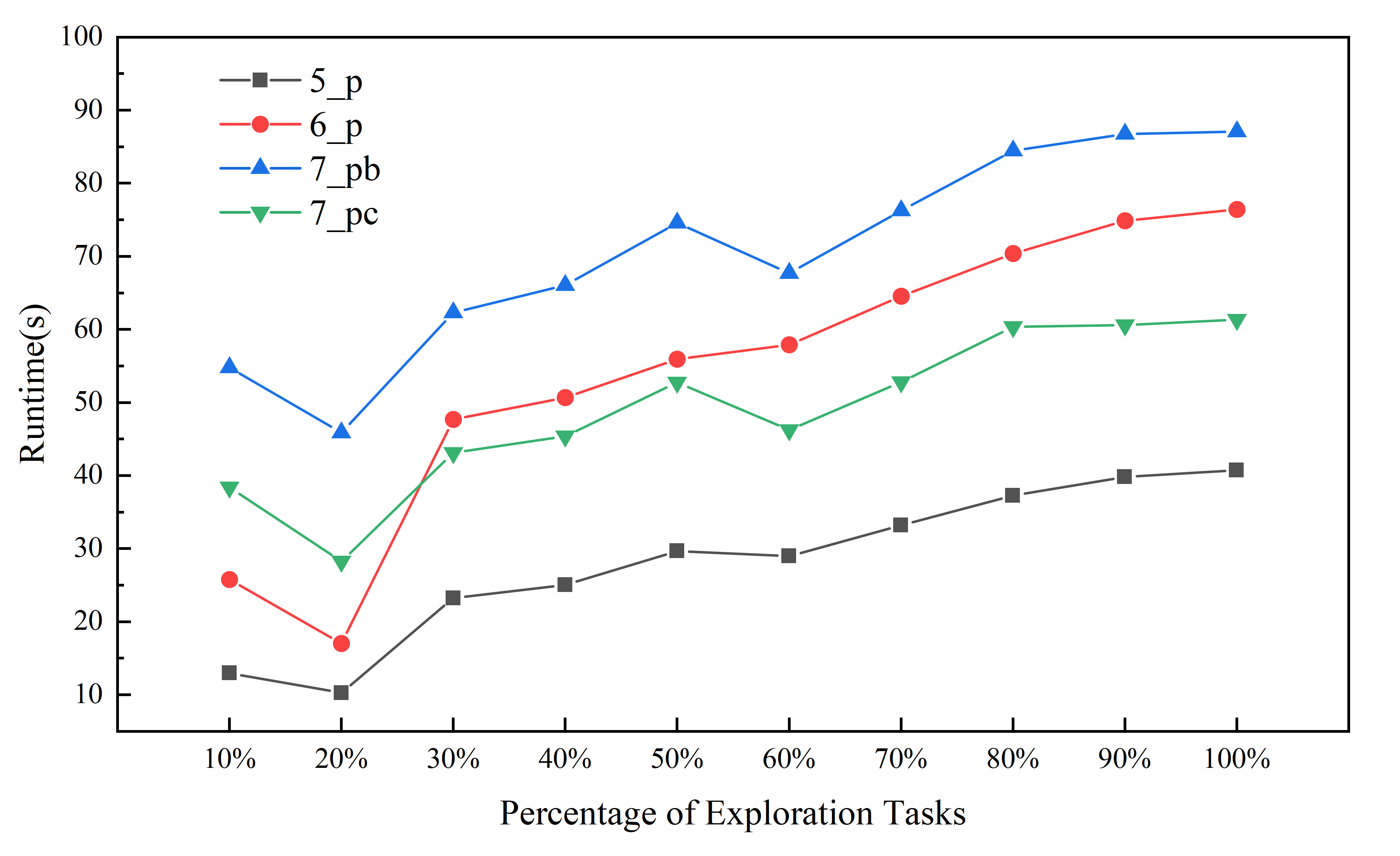}
    \caption{Varying the Percentage of the Exploration Tasks}
    \label{fig:Scalability2}
\end{figure}
\vspace{-10pt}
\subsection{Memory Consumption}
The graph matching systems tend to generate a large-scale intermediate result sets. Storing the results will require huge memory space, and worse the performance. XMiner can not only accelerate the graph matching task, but also require less memory in the process of graph matching. Here, 
we run the matching tasks on the data sets mentioned in Section \ref{sec:scal} and record the peak memory consumption. We exclude those memory for the data graph since data graph is resident in memory. As shown in Figure \ref{fig:memoryconsumption},  when XMiner performs matching tasks of different pattern graphs, the peak memory consumption increases smoothly 
with the growing percentage, and there is no sharp increase in memory consumption when it performs more matching tasks (e.g. 80\%-100\%). One reason is that XMiner need not maintain many redundant data, but reuse intermediate results for edges and vertices included by others.

\begin{figure}[htb]
\centering
\includegraphics[trim=0 20 0 0, clip,scale=0.27]{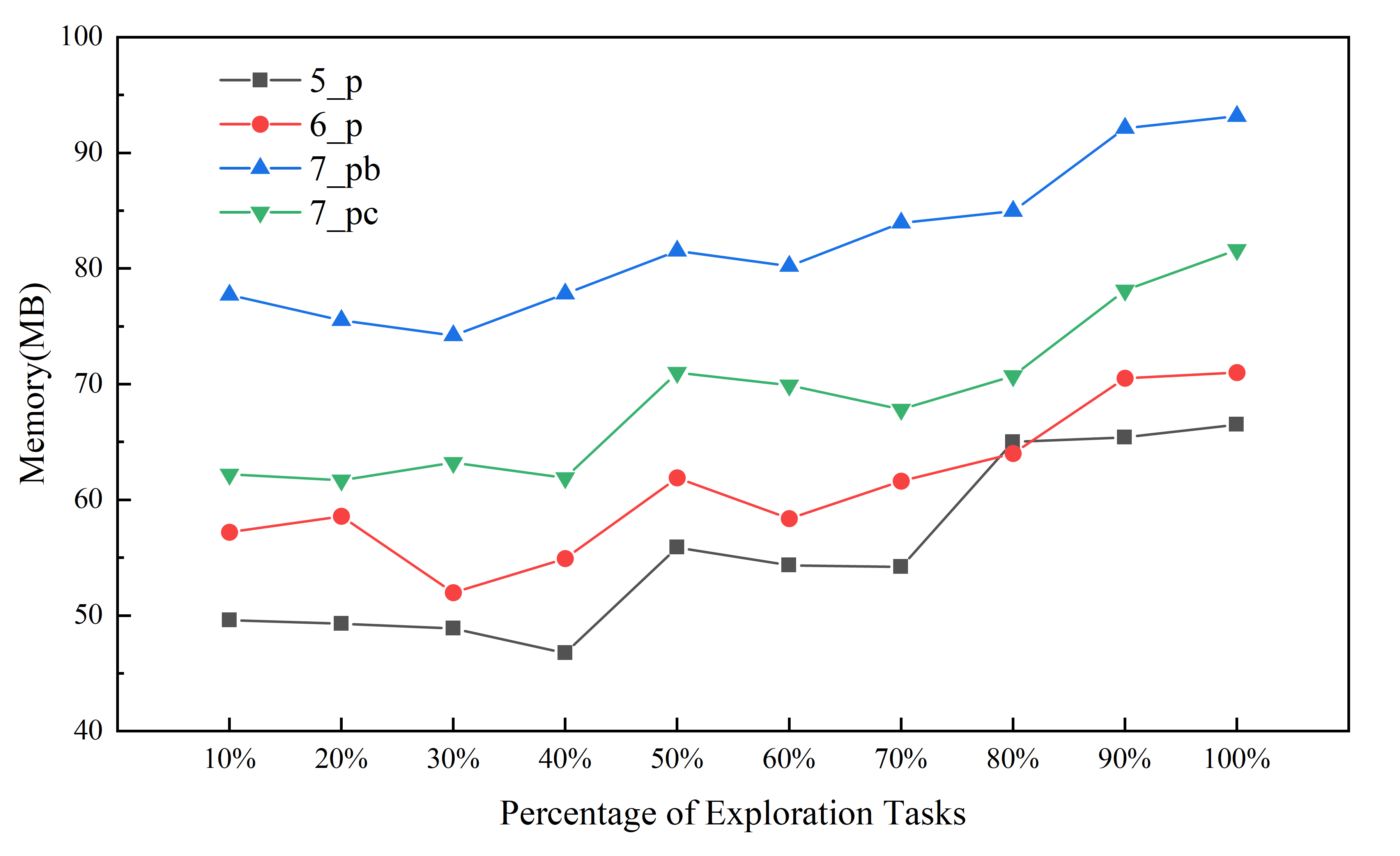}
\caption{Memory Consumption}
\label{fig:memoryconsumption}
\end{figure}
We also observe that the peak memory for intermediate results grows when graph patterns are becoming more complex. The reason is that matching complex pattern graphs need more explorations and thus more intermediate results are resident in memory. However, it does not grows rapidly because XMiner reduces pattern graph. For example, After reduction, there are 4 edges left in 5\_p and 6 edges left in 7\_pb and 7\_pc for exploration. The partial matchings of these edges will be reused for initializing other edges. So, when matching 7\_pb, the peak memory consumption reaches about 95MB while it requires at most 65MB memory for holding the partial matchings of 5\_p during the matching process. 
\section{Related Work}
\label{sec:related}
%
Graph mining receives constant attention and many graph mining approaches are reported which  can be roughly divided into three classes:  stand-alone graph mining, distributed graph mining \cite{QFrag,NScale, Arabesque,Fractal, DistGraph, PruneJuice18, GMiner, G-thinker}, and hardware acceleration of graph mining\cite{PBE,GSI,PBE2,GAMMA,G2Miner, Gramer,Tesseract,FlexMiner,FINGERS, IntersectX,SISA,NDMiner}. Here, we focus on stand-alone graph matching. 

Many graph matching algorithms are proposed \cite{SunSIGMOD20,Kim2021}. Since most of them are single-threaded program, they can not process complex patterns or large graphs in reasonable time. 
So, many algorithms or systems explore how to mine subgraphs in parallel. 
The systems \cite{RStream18,Mhedhbi19,Han19,Pangolin,Sandslash,AutoMine} explore programming model or execution model so that users can easily and efficiently mine specified graph patterns using them. 
Some disk-based systems, such as OPT \cite{OPT}, DualSim \cite{DUALSIM16}, Kaleido \cite{Kaleido} overlapp computation and IO operations since is much slower than CPU. 


Pattern graphs imply some special sub-structures, such as symmetry. So, there exist a lot of redundant computations and data access. To avoid redundant computation, GraphZero \cite{GraphZero} tries to destruct symmetry of pattern graph according to group theory. Similarly, GraphPi \cite{GraphPi} utilizes 2-cycles of group theory to generate multiple sets of asymmetric restrictions, so that it can eliminate automorphisms.
Peregrine \cite{Peregrine} analyzes the pattern in advance to eliminate its symmetry, avoids redundant isomorphism detections.
Fractal \cite{Fractal} also uses symmetry breaking for pattern matching. 
Similarly, AutoMine \cite{AutoMine} also employs symmetry breaking for some use cases. It may incur duplicate matches. So users need examine each single match to filter duplicate matches. VEQs \cite{Kim2021} identify equivalences of vertices in order to reduce
search space during subgraph search. SumPA \cite{SumPA} explores intra-pattern and inter-pattern computational redundancies to improve the mining performance. The work of the above systems is mainly focused on the optimization of isomorphism detection and redundant computing. There exist many set operations on intermediate results in a graph mining task. Ordering the set operations can eliminate redundancies in some sense. So, Cyclosa \cite{Cyclosa}, an undirected graph pattern mining framework transfers the set operations for a pattern into a set dataflow and then executes the dataflow in parallel. It reduces the redundant computation. Lack of fine-grained inside analysis of patterns not only makes performance improvement of these systems more difficult, but also limits their applicability to more complex matching tasks. 
XMiner also uses symmetry destruction and the generation of execution plans. At the same time,
XMiner employs a pattern reduction method, which can effectively simplify complex pattern graphs, reduce the internal overhead in the matching process, and improve the performance.

$Turbo_{iso}$ \cite{TurboISO}. an approach for undirected subgraph
search proposes concept \textit{Neighborhood Equivalence Class} (NEC), in which all vertices have same embeddings. $Turbo_{iso}$ rewrites a pattern graph into an NEC tree, where each vertex corresponds to an NEC. For complex pattern graphs, many vertices are not in an NEC. $Turbo_{iso}$ only generates combinations for each NEC. If a combination does not contribute to a solution, all possible permutations for the combination will be pruned without any enumeration. We would like to note that, our constraint inclusion set is essentially different from the NEC concept, in which all vertices have the same label and the same set of adjacent query vertices. Moreover, \textit{NEC} does not involve edges.  
Furthermore, XMiner employs a pattern reduction method 
and thus reduces the internal overhead in the matching process.
\section{Conclusions}
\label{sec:conclusion}
We have presented XMiner based on a pattern reduction approach for fast and efficient digraph matching. 
XMiner is flexible enough to work on large range of graphs and pattern graphs. There are several important features of XMiner. First, XMiner is characterized as a reduction approach. So, XMiner has no trouble handling complex pattern graphs that other systems can not accomplish. Second, our definition of graph matching problem focuses on analysis of the relationship between constraints. So, it is flexible enough to capture the characteristics of graph matching problems. Third, identifying constraint-inclusion relationship helps large number of redundant operations, such as data access and computations etc, especially when pattern graph is complex and data graph is very large. We have demonstrated that the XMiner outperforms GraphPi and Peregrine on real graphs of varying sizes for a variety of pattern graphs although the competitors are designed for undirected graphs. The results also show XMiner is applicable and scalable.
\small
\bibliographystyle{ACM-Reference-Format}

\balance
\end{document}